\newcommand*\patchAmsMathEnvironmentForLineno[1]{%
\expandafter\let\csname old#1\expandafter\endcsname\csname #1\endcsname
\expandafter\let\csname oldend#1\expandafter\endcsname\csname end#1\endcsname
\renewenvironment{#1}%
{\linenomath\csname old#1\endcsname}%
{\csname oldend#1\endcsname\endlinenomath}}%
\newcommand*\patchBothAmsMathEnvironmentsForLineno[1]{%
\patchAmsMathEnvironmentForLineno{#1}%
\patchAmsMathEnvironmentForLineno{#1*}}%
\tikzstyle{seed}=[-latex,red, thick]
\tikzstyle{ray}=[green!50!black,dotted,thick,-latex]
\newtheorem{theorem}             {Theorem}[section]
\newtheorem{lemma}     	[theorem] {Lemma}
\newcommand{\NP}{\mbox{NP}}
\newcommand{\PP}{\mbox{P}}
\newcommand{\IS}{{\sc Independent Set}}
\newcommand{\DM}{{\sc 3D-Matching}}
\newcommand{\eps}{\epsilon}
\newcommand{\Dag}{dag}
\newcommand{\opt}{\mathrm{opt}}
\newcommand{\Oh}{\mathrm{O}}
\newcommand{\maxleaves}{{\sc Maximum Leaf Spanning Arborescence}}
\newcommand{\maxsnp}{MaxSNP}
\newcommand{\calS}{\mathcal{S}}
\begin{document}

\title{Leafy Spanning Arborescences in {DAG}s}

\thanks{This research was conducted while the authors were attending the 3rd~WoPOCA: ``Workshop Paulista em Otimiza\c c\~ao, Combinat\'oria e Algoritmos''.  An extended abstract of this work is to appear in the proceedings of the 14th Latin American Theoretical Informatics Symposium (LATIN)~\cite{FernandesL2020}.
C. G. Fernandes was partially supported by CNPq (Proc.~308116/2016-0 and~423833/2018-9).}

\author{Cristina G. Fernandes and Carla N. Lintzmayer}

\shortdate
\yyyymmdddate
\settimeformat{ampmtime}
\date{\today, \currenttime}

\address{Institute of Mathematics and Statistics. University of S{\~a}o Paulo. S{\~a}o Paulo, Brazil}
\email{\href{mailto:cris@ime.usp.br}{\texttt{cris@ime.usp.br}}}

\address{Center for Mathematics. Computing and Cognition. Federal University of ABC. Santo Andr{\'e}, S{\~a}o Paulo, Brazil}
\email{\href{mailto:carla.negri@ufabc.edu.br}{\texttt{carla.negri@ufabc.edu.br}}}

\begin{abstract}
    Broadcasting in a computer network is a method of transferring a message to all recipients simultaneously.
    It is common in this situation to use a tree with many leaves to perform the broadcast, as internal nodes have to forward the messages received, while leaves are only receptors.
    We consider the subjacent problem of, given a directed graph~$D$, finding a spanning arborescence of~$D$, if one exists, with the maximum number of leaves.  
    In this paper, we concentrate on the class of rooted directed acyclic graphs, for which the problem is known to be \maxsnp-hard. 
    A 2-approximation was previously known for this problem on this class of directed graphs.
    We improve on this result, presenting a $\frac32$-approximation.
    We also adapt a result for the undirected case and derive an inapproximability result for the vertex-weighted version of \maxleaves\ on rooted directed acyclic graphs.
\end{abstract}

\maketitle

%%%%%%%%%%%%%%%%%%%%%%%%%%%%%%%%%%%%%%%%%%%%%%%%%%%%%%%%%%%%%%%%%%%%%%%%%%%%%%%%%%%%%%%%%%%%%%%%%%%%%%%%%%%%%%%%%%%%%%%%%%%%%%%%%%%%%%%%%%%%%%%%%%%%%%%
%%%%%%%%%%%%%%%%%%%%%%%%%%%%%%%%%%%%%%%%%%%%%%%%%%%%%%%%%%%%%%%%%%%%%%%%%%%%%%%%%%%%%%%%%%%%%%%%%%%%%%%%%%%%%%%%%%%%%%%%%%%%%%%%%%%%%%%%%%%%%%%%%%%%%%%
%%%%%%%%%%%%%%%%%%%%%%%%%%%%%%%%%%%%%%%%%%%%%%%%%%%%%%%%%%%%%%%%%%%%%%%%%%%%%%%%%%%%%%%%%%%%%%%%%%%%%%%%%%%%%%%%%%%%%%%%%%%%%%%%%%%%%%%%%%%%%%%%%%%%%%%
\section{Introduction}

The problem of, given a connected undirected graph, finding a spanning tree with the maximum number of leaves is well known in the literature, appearing as one of the NP-hard problems in the classic book by Garey and Johnson~\cite{GareyJ1979}.  
With many applications in network design problems, the best known result for it is a long standing 2-approximation proposed by Solis-Oba~\cite{SolisOba1998,SolisObaBL2017}.
In the literature, a directed version of this problem has also been considered.

For network broadcast, one looks for a directed spanning tree rooted at a source node, in which all arcs are directed away from the source.
Broadcast trees with many leaves are preferable in this situation~\cite{JuttnerM2005,PopeS2015}.
Internal nodes have to forward the messages received, while leaves are only receptors.
Also, in some applications, it is interesting to build a more robust backbone tree, and possibly less expensive links to reach the endpoint clients.
The cost of such a backbone tree is usually related to its number of arcs.
By maximizing the number of leaves in a rooted directed spanning tree, we are minimizing the number of arcs in the tree obtained from removing the arcs incident to the leaves, which can be seen as a backbone tree for the network.
To define the directed version of the problem precisely, we introduce some notation.

Let $D$ be a directed graph.
A vertex $r$ in $D$ is a \emph{root} if there is a directed path in $D$ from $r$ to every vertex in $D$. 
If $r$ is a root in $D$, then we say $D$ is \emph{$r$-rooted}, or simply \emph{rooted}.
We say $D$ is \emph{acyclic} if there is no directed cycle in~$D$.
A directed acyclic graph is called a \emph{dag}, for short.
Note that any rooted dag has only one root.
An \emph{arborescence} is an $r$-rooted dag $T$ for which there is a unique directed path from $r$ to every vertex in $T$. 
The \emph{out-degree} of a vertex in a directed graph is the number of arcs that start in that vertex, while the \emph{in-degree} of a vertex is the number of arcs that end in that vertex.
A vertex of out-degree~$0$ in an arborescence is called a \emph{leaf}.  

The \maxleaves\ is the problem of, given a rooted directed graph $D$, finding a spanning arborescence of $D$ with the maximum number of leaves.  
Let $\opt(D)$ denote the number of leaves in such an arborescence.

Given an undirected graph $G$, one can consider the digraph $D$ obtained by substituting each edge by two arcs, one in each direction.  
With this construction, it is easy to deduce that the \maxleaves\ is NP-hard, as its undirected version.  
Alon et al.~\cite{AlonFGKS2009} showed that the \maxleaves\ remains NP-hard on dags.  
They were in fact investigating whether the \maxleaves\ is fixed parameter tractable~\cite{CyganFKLMPPS2016}, and they gave a positive answer for strongly connected digraphs, as well as for dags.  
Binkele-Raible et al.~\cite{BinkeleRaibleFFLSV2012} provided a cubic size kernel for the \maxleaves, 
and Daligault and Thomass\'{e}~\cite{DaligaultT2009} improved on this result, providing a quadratic size kernel.  
It is worth mentioning that a linear size kernel is known for the undirected version of the problem.  

As a byproduct, Daligault and Thomass\'{e}~\cite{DaligaultT2009} derived a 92-approximation for the \maxleaves\ in general rooted directed graphs.  
This turns into a 24-approximation when the digraph has no digon (directed cycle of length two).  
More recently, Schwartges, Spoerhase, and Wolff~\cite{SchwartgesSW2012} described a 2-approximation for the case in which the digraph is acyclic, and proved that this restricted version of the \maxleaves\ is \maxsnp-hard.
Their algorithm is inspired on a greedy 3-approximation by Lu and Ravi~\cite{LuR1998} for the undirected version of the problem.  

Sections~\ref{sec:algo} and \ref{sec:ratio} present a $\frac32$-approximation algorithm for the \maxleaves\ on rooted dags. 
Our algorithm is somehow inspired on Solis-Oba's algorithm, in the sense that it prioritizes certain expansion rules.
However, there is a key difference: in one of the rules, the number of expansions can be optimized. 
Section~\ref{sec:3dmatching} explores the relation of our algorithm with matchings.
Section~\ref{sec:inapproximability} shows an inapproximability result for the vertex-weighted version of \maxleaves\ on rooted dags.
Some open problems are presented in Section~\ref{sec:remarks}.

%%%%%%%%%%%%%%%%%%%%%%%%%%%%%%%%%%%%%%%%%%%%%%%%%%%%%%%%%%%%%%%%%%%%%%%%%%%%%%%%%%%%%%%%%%%%%%%%%%%%%%%%%%%%%%%%%%%%%%%%%%%%%%%%%%%%%%%%%%%%%%%%%%%%%%%
%%%%%%%%%%%%%%%%%%%%%%%%%%%%%%%%%%%%%%%%%%%%%%%%%%%%%%%%%%%%%%%%%%%%%%%%%%%%%%%%%%%%%%%%%%%%%%%%%%%%%%%%%%%%%%%%%%%%%%%%%%%%%%%%%%%%%%%%%%%%%%%%%%%%%%%
%%%%%%%%%%%%%%%%%%%%%%%%%%%%%%%%%%%%%%%%%%%%%%%%%%%%%%%%%%%%%%%%%%%%%%%%%%%%%%%%%%%%%%%%%%%%%%%%%%%%%%%%%%%%%%%%%%%%%%%%%%%%%%%%%%%%%%%%%%%%%%%%%%%%%%%
\section{The algorithm}
\label{sec:algo}

A \emph{branching} is a forest of arborescences.
A vertex that is not a leaf in a branching is called \emph{internal}. 
For a positive integer~$t$, a~\emph{$t$-branching} is a branching all of whose internal vertices have out-degree at least $t$.
See Figure~\ref{fig:branching}.

\begin{figure}[htb]
\centering
\begin{subfigure}{0.46\textwidth}
  \includegraphics[width=1\textwidth,left]{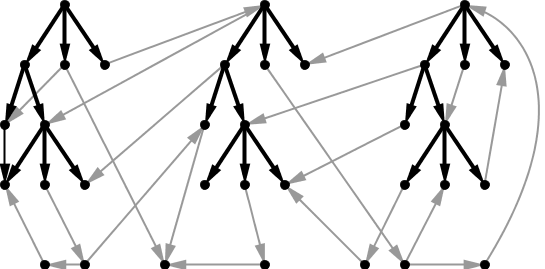}
\caption{A 2-branching.}\label{figa:branching}
\end{subfigure}
\hspace{5mm}
\begin{subfigure}{0.46\textwidth}
  \includegraphics[width=1\textwidth,right]{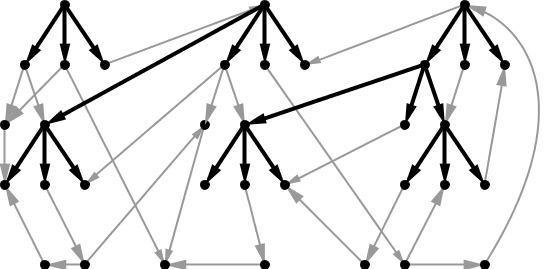}
\caption{A 3-branching.}\label{figb:branching}
\end{subfigure}
\caption{The bold arcs show two different branchings in a rooted dag.}
\label{fig:branching}
\end{figure}

For a directed graph $D$, we denote by $V(D)$ and $A(D)$ the set of vertices and arcs of $D$ respectively.
For a vertex $v$ in~$V(D)$, we denote by $d_D^+(v)$ its out-degree in~$D$ and by $d_D^-(v)$ its in-degree in~$D$.  
The \emph{out-neighbors} of $v$ are the extreme vertices of arcs that start at $v$.
We say a spanning $t$-branching is maximal if, for any vertex of out-degree~$0$, its set of out-neighbors with in-degree~$0$ contains less than~$t$ vertices.
The spanning branchings in Figure~\ref{fig:branching} are maximal.

Algorithm~\ref{alg:2} presents \Call{GreedyExpand}{$D$, $t$, $F$}, the heart of our approximation.
Given a rooted \Dag\ $D$, a positive integer $t$, and a spanning $(t+1)$-branching~$F$ of~$D$, it returns a maximal spanning $t$-branching of~$D$ containing~$F$.

\begin{algorithm}
\begin{algorithmic}
\Require{{\small rooted \Dag\ $D$, a positive integer $t$, and a spanning $(t{+}1)$-branching~$F$ of~$D$}}
\Ensure{{\small a maximal spanning $t$-branching of $D$ containing~$F$}}

\State $F' \gets F$
\For{each $v \in V(D)$ such that $d_{F'}^+(v)=0$}
    \State $U_v \gets \{vu \in A(D) : d_{F'}^-(u)=0\}$
    \If{$|U_v| \geq t$}
        \State $F' \gets F' + U_v$
    \EndIf
\EndFor
\State \Return $F'$
\end{algorithmic}
\caption{\textsc{GreedyExpand}($D$, $t$, $F$)}
\label{alg:2}
\end{algorithm}

Let us argue that the call \Call{GreedyExpand}{$D$, $t$, $F$} produces a maximal $t$-branching.  
Indeed, the returned $F'$ is spanning because~$F'$ contains the spanning branching~$F$.
Also, all internal vertices of~$F'$ have in-degree at most one and out-degree at least $t$.
So $F'$ is a $t$-branching and is clearly maximal.

% For instance, 
The branchings in Figure~\ref{fig:branching} are possible outputs of the calls \Call{GreedyExpand}{$D$, $2$, $F$} and \Call{GreedyExpand}{$D$, $3$, $F$}, respectively, when $D$ is the depicted dag and $F$ is the spanning branching of $D$ with no arcs.

We observe that the \Call{GreedyExpand}{} is an extension of the \Call{Expansion}{} algorithm by Schwartges, Spoerhase, and Wolff~\cite{SchwartgesSW2012}. 
Particularly, if $F$ is the spanning branching of $D$ with no arcs, then \Call{GreedyExpand}{$D$, $2$, $F$} behaves as \Call{Expansion}{$D$} on any rooted dag~$D$.  

Algorithm~\ref{alg:maxleaves} shows our approximation for the \maxleaves\ on rooted \Dag s, named \Call{MaxLeaves}{}. 
It uses twice the previously presented \Call{GreedyExpand}{}.
Algorithm \Call{MaxLeaves}{} also uses an algorithm \Call{MaxExpand}{$D$,~$F$} that receives a rooted \Dag\ $D$ and a maximal spanning $3$-branching~$F$ of~$D$, and returns a maximum spanning $2$-branching of~$D$ containing~$F$.
Algorithm \Call{MaxExpand}{$D$,~$F$} will be described after \Call{MaxLeaves}{}. 

\begin{algorithm}
\begin{algorithmic}
\Require{rooted acyclic directed graph $D$}
\Ensure{spanning arborescence with at least $\frac32\,\opt(D)$ leaves}

\State let $F_0$ be the spanning branching with no arcs
\State $F_1 \gets$ \Call{GreedyExpand}{$D$, $3$, $F_0$}
\State $F_2 \gets$ \Call{MaxExpand}{$D$, $F_1$}
\State $T \gets$ \Call{GreedyExpand}{$D$, $1$, $F_2$}
\State \Return $T$
\end{algorithmic}
\caption{\textsc{MaxLeaves}($D$)}
\label{alg:maxleaves}
\end{algorithm}

The call \Call{GreedyExpand}{$D$, $1$, $F$} returns a maximal 1-branching of the rooted dag $D$ containing $F$, that is, a spanning arborescence of~$D$ containing~$F$.
So \Call{MaxLeaves}{$D$} indeed produces a spanning arborescence of~$D$. 
See Figure~\ref{fig:maxleaves}.
In the next section, we will prove that algorithm \Call{MaxLeaves}{} is a~$\frac32$-approximation for the \maxleaves\ on rooted dags.

\begin{figure}[htb]
\centering
\begin{subfigure}{0.46\textwidth}
  \includegraphics[width=1\textwidth]{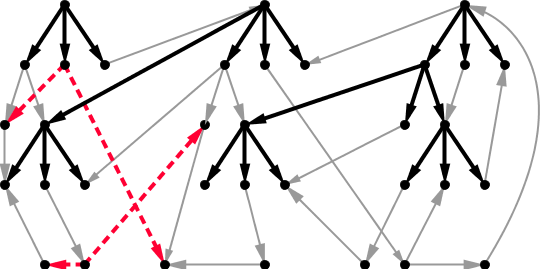}
  \caption{Branching $F_2$.}\label{figa:phase2}
\end{subfigure}
\hspace{5mm}
\begin{subfigure}{0.46\textwidth}
  \includegraphics[width=1\textwidth]{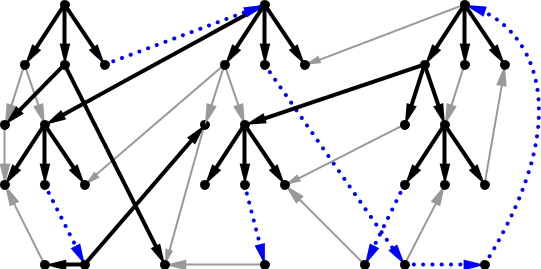}
\caption{Spanning arborescence $T$.}\label{figb:phase3}
\end{subfigure}
\caption{In the left, the bold arcs represent a possible maximal 3-branching $F_1$ and the dashed arcs were added to obtain $F_2$.  In the right, the bold arcs represent $F_2$ and the dotted arcs were added to obtain $T$.}
\label{fig:maxleaves}
\end{figure}

The \Call{MaxExpand}{$D$, $F$} procedure, presented in Algorithm~\ref{alg:3}, is an optimized version of \Call{GreedyExpand}{$D$, $2$, $F$}.
It uses an algorithm \mbox{\Call{MaximumMatching}{}} that receives an undirected multigraph~$G$ and returns a maximum matching in~$G$.
Polynomial-time algorithms for this are known in the literature~\cite{Edmonds1965}.

\newcommand{\Candidates}{\mathit{Candidates}}
\begin{algorithm}
\begin{algorithmic}
\Require{rooted \Dag\ $D$ and a maximal spanning $3$-branching~$F$ of $D$}
\Ensure{a maximum spanning $2$-branching of $D$ containing~$F$}

\For{each $v \in V(D)$ such that $d_F^+(v)=0$}
    \State $U_v \gets \{vu \in A(D) : d_F^-(u)=0\}$
\EndFor
\State $\Candidates \gets \{v \in V(D) : d_F^+(v)=0 \mbox{ and } |U_v|=2\}$
\State $V' \gets \{u \in V(D) : d_F^-(u)=0\}$
\State $E' \gets \{e_v = uw : v \in \Candidates \mbox{ and } U_v = \{vu,vw\}\}$
\State let $G$ be the undirected multigraph $(V',E')$
\State $M \gets$ \Call{MaximumMatching}{$G$}
\State $F' \gets F$
\For{each $e_v \in M$}
    \State $F' \gets F' + U_v$
\EndFor
\State \Return $F'$%
\end{algorithmic}
\caption{\textsc{MaxExpand}($D$, $F$)}
\label{alg:3}
\end{algorithm}

The call \Call{MaxExpand}{$D$, $F$} produces a maximum spanning $2$-branching of~$D$ containing~$F$.
It does this by constructing an undirected multigraph~$G$ whose vertices are vertices of in-degree~$0$ in~$F$ and an edge~$uw$ exists in~$G$ if~$u$ and~$w$ are the only out-neighbors of in-degree~$0$ in~$F$ of some vertex~$v$ of out-degree~$0$ in~$F$.
Thus, edge~$uw$ of~$G$ represents an expansion that can be performed on vertex~$v$ of~$D$.
The fact that more than one vertex of out-degree~$0$ in~$F$ may have vertices~$u$ and~$w$ of in-degree~$0$ in~$F$ as their out-neighbors shows the need for a multigraph.
Independent edges in this undirected multigraph correspond to compatible expansions, so a maximum matching gives the maximum number of expansions that can be performed in~$D$. 
See Figure~\ref{fig:matching}.

Indeed, note that, for the returned $F'$ to be a branching, the edges $e_v$ corresponding to expanded vertices~$v$ must form a matching in the multigraph~$G$. 
Otherwise, there would be vertices with in-degree greater than one. 
As $F$ is a maximal $3$-branching and~$D$ is acyclic, the returned $F'$ is also a branching, and therefore a maximum $2$-branching containing~$F$.
See Figure~\ref{fig:matching2}.

We observe that, in the dag shown in Figure~\ref{fig:matching}, our algorithm produces the best arborescence possible, with roughly half of the vertices of the dag as leaves.
Meanwhile, the algorithm due to Schwartges, Spoerhase, and Wolff~\cite{SchwartgesSW2012} could have produced an arborescence with only one forth of the vertices as leaves.

\begin{figure}[h]
\centering
\begin{subfigure}{\textwidth}
  \includegraphics[width=.8\textwidth,center]{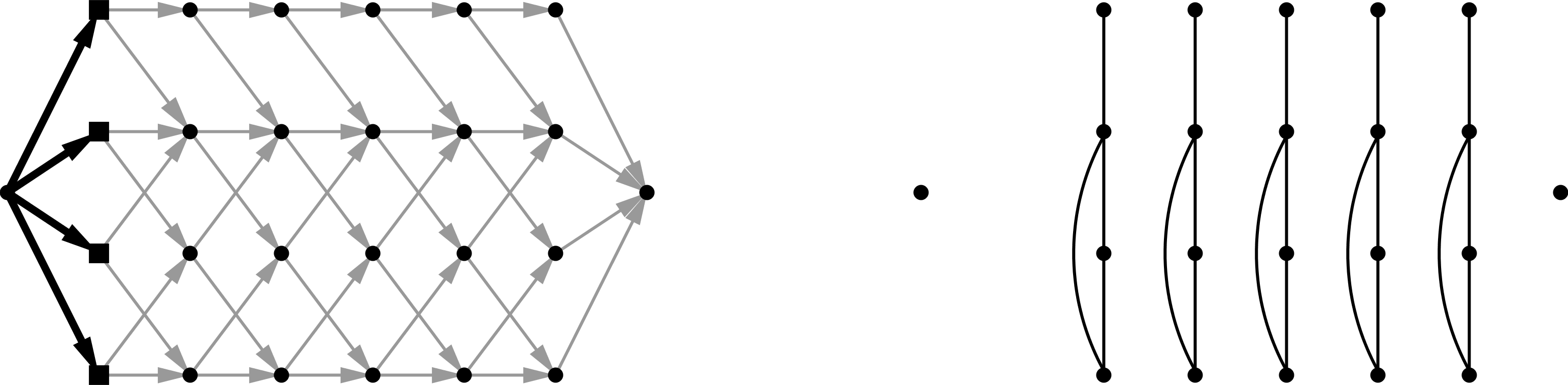}
  \caption{The bold arcs incident to the root of the dag are a maximal 3-branching. 
    The round vertices form the set $V$, and the corresponding multigraph $G$ is in the right.}
  \label{fig:matching}
\end{subfigure}
\begin{subfigure}{\textwidth}
  \bigskip
  \includegraphics[width=.8\textwidth,center]{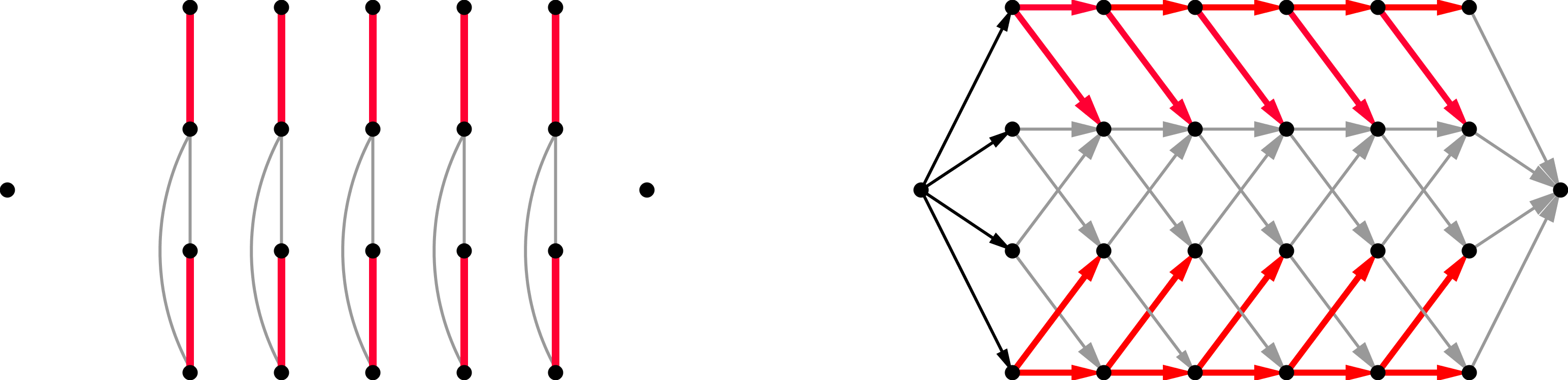}
  \caption{In the left, a maximum matching in red and bold. In the right, the corresponding expansions in red and bold.}
  \label{fig:matching2}
\end{subfigure}
\caption{Example of an execution of \textsc{MaxExpand}.}
\label{fig:maxexpand}
\end{figure}

%%%%%%%%%%%%%%%%%%%%%%%%%%%%%%%%%%%%%%%%%%%%%%%%%%%%%%%%%%%%%%%%%%%%%%%%%%%%%%%%%%%%%%%%%%%%%%%%%%%%%%%%%%%%%%%%%%%%%%%%%%%%%%%%%%%%%%%%%%%%%%%%%%%%%%%
%%%%%%%%%%%%%%%%%%%%%%%%%%%%%%%%%%%%%%%%%%%%%%%%%%%%%%%%%%%%%%%%%%%%%%%%%%%%%%%%%%%%%%%%%%%%%%%%%%%%%%%%%%%%%%%%%%%%%%%%%%%%%%%%%%%%%%%%%%%%%%%%%%%%%%%
%%%%%%%%%%%%%%%%%%%%%%%%%%%%%%%%%%%%%%%%%%%%%%%%%%%%%%%%%%%%%%%%%%%%%%%%%%%%%%%%%%%%%%%%%%%%%%%%%%%%%%%%%%%%%%%%%%%%%%%%%%%%%%%%%%%%%%%%%%%%%%%%%%%%%%%
\section{Approximation ratio}
\label{sec:ratio}

Let $F_1$, $F_2$, and $T$ be the branchings produced in the call \Call{MaxLeaves}{$D$}.
For~$i=1,2$, let~$k_i$ be the number of non-trivial components of $F_i$ and $N_i$ be the number of vertices in such components.
We denote by $\ell(F)$ the number of leaves in any branching $F$.

For example, if $D$ is the dag depicted in Figure~\ref{fig:branching}, then $F_1$ could be the spanning 3-branching depicted in Figure~\ref{figb:branching}, $F_2$ could be the spanning 2-branching depicted in Figure~\ref{figa:phase2}, and $T$ could be the arborescence in Figure~\ref{figb:phase3}.
In this example, we have~$k_1 = 3$, $N_1 = 25$, $k_2 = 4$, and $N_2 = 30$.

\begin{lemma}\label{lem:lower}
    Let $T$ be the arborescence produced by \Call{MaxLeaves}{$D$}.
    Then \[\ell(T) \ \geq \ \frac{N_1-k_1}6 + \frac{N_2-k_2}2 + 1 \; .\]
\end{lemma}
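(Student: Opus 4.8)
The plan is to express $\ell(T)$ through the internal vertices of $T$ and then bound their number by counting arcs phase by phase. Write $n = |V(D)|$. Every internal vertex of $T$ (one with $d_T^+(v) \ge 1$) falls into exactly one of three classes according to the stage of \Call{MaxLeaves}{$D$} at which it first acquires an out-arc: class $A$, vertices already internal in $F_1$; class $B$, vertices that are leaves of $F_1$ but internal in $F_2$; and class $C$, vertices that are leaves of $F_2$ but internal in $T$. Writing $a$, $b$, $c$ for the sizes of these classes, we have $\ell(T) = n - a - b - c$, so it suffices to bound $a$, $b$, $c$ from above.

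The crux is a structural observation about out-degrees, read off directly from the algorithms. Since \Call{GreedyExpand}{} and \Call{MaxExpand}{} only ever add arcs at a vertex whose current out-degree is $0$, no vertex is expanded in two different phases; hence each vertex keeps in $T$ the out-degree it had at the moment it became internal. Thus a class-$A$ vertex has out-degree at least $3$ in $T$ (it was internal in the $3$-branching $F_1$); a class-$B$ vertex has out-degree exactly $2$ in $T$, because a matched vertex of \Call{MaxExpand}{} receives precisely the two arcs of its set $U_v$; and a class-$C$ vertex has out-degree at least $1$. I would state and verify these three claims as the first step, since they are where the specific behaviour of each phase enters.

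With these in hand the counting is routine. Because each non-trivial component is an arborescence, $F_1$ has $N_1 - k_1$ arcs and $F_2$ has $N_2 - k_2$ arcs, and every arc of $F_2$ leaves a class-$A$ or class-$B$ vertex. Summing out-degrees gives $3a \le N_1 - k_1$ and $2b = (N_2-k_2) - (N_1-k_1)$, the latter an equality since the class-$B$ arcs are exactly those of $F_2$ not in $F_1$. Finally $T$ is a spanning arborescence, so it has $n-1$ arcs, all arcs outside $F_2$ leave class-$C$ vertices, and therefore $c \le (n-1) - (N_2-k_2)$.

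Substituting $a \le (N_1-k_1)/3$, the value of $b$, and $c \le (n-1)-(N_2-k_2)$ into $\ell(T) = n - a - b - c$, the two copies of $n$ cancel, the $(N_1 - k_1)$ contributions combine with coefficient $\tfrac12 - \tfrac13 = \tfrac16$, the term $(N_2-k_2)$ survives with coefficient $\tfrac12$, and the lone root of the spanning arborescence (the ``$-1$'' among its $n-1$ arcs) yields the additive $+1$, giving exactly the claimed inequality. The main obstacle is not this arithmetic but the out-degree claims of the second paragraph, and in particular checking carefully that \Call{MaxExpand}{} produces class-$B$ vertices of out-degree precisely $2$ and that the final \Call{GreedyExpand}{$D$, $1$, $F_2$} leaves the out-degrees of all earlier internal vertices untouched.
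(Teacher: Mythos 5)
Your proof is correct and is essentially the paper's own argument in dual bookkeeping: where the paper counts leaves lost phase by phase via component counts, you count internal vertices created phase by phase via arc counts, and the three bounds you obtain ($a\le\frac{N_1-k_1}{3}$, $2b=(N_2-k_2)-(N_1-k_1)$, $c\le (n-1)-(N_2-k_2)$) coincide exactly with the paper's estimates for $\ell(F_1)$ and for the leaves lost from $F_1$ to $F_2$ and from $F_2$ to $T$. The structural observation you single out (no vertex is expanded in two phases, so out-degrees are frozen once assigned) is the same fact the paper uses implicitly when it asserts the ``exactly'' in its loss counts, so no new idea separates the two proofs.
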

\begin{proof}
    Let $n$ be the number of vertices of $D$. 
    Let $T_1,\ldots,T_{k_1}$ be the non-trivial arborescences in $F_1$. 
    Note that $\ell(T_j) \geq \frac{1+2|V(T_j)|}3$ because all internal vertices of~$T_j$ have out-degree at least~3.
    Therefore, 
    \begin{align*}
        \ell(F_1) & \ = \ n - N_1 + \sum_{j=1}^{k_1} \ell(T_j)
                    \ \geq \  n - N_1 + \sum_{j=1}^{k_1} \frac{1+2|V(T_j)|}3 \\
                  & \ = \ n - N_1 + \frac{2N_1}3 + \frac{k_1}3 
                    \ = \ n - \frac{N_1-k_1}3 \; .
    \end{align*}

    The number of components in $F_i$ is $n-N_i+k_i$ for $i=1,2$.
    Hence, the number of leaves lost from $F_1$ to $F_2$ is exactly
    \begin{align}\label{eq:leaveslostF1toF2}
      \frac{(n-N_1+k_1) - (n-N_2+k_2)}2 & \ = \ \frac{N_2-k_2}2 - \frac{N_1-k_1}2 \; .
    \end{align}
    Also, the number of leaves lost from $F_2$ to $T$ is exactly $n - N_2 + k_2 - 1 = n - (N_2-k_2) - 1$.
    Thus
    \begin{align*}
        \ell(T) & \ \geq \ n - \frac{N_1-k_1}3 - \left(\frac{N_2-k_2}2 - \frac{N_1-k_1}2\right) - (n-(N_2-k_2)-1) \\
                & \ = \ \frac{N_1-k_1}6 + \frac{N_2-k_2}2 + 1 \; .
    \end{align*}
\end{proof}

For $D$, $F_1$, $F_2$, and $T$ as in Figures~\ref{fig:branching} and~\ref{fig:maxleaves}, we have that $\ell(T) = 18$ and Lemma~\ref{lem:lower} gives as lower bound on $\ell(T)$
$$\frac{N_1-k_1}6 + \frac{N_2-k_2}2 + 1 \ = \ \frac{25-3}6 + \frac{30-4}2 + 1 
                                      % \ = \ \frac{22}6 + \frac{26}2 + 1 
                                        \ = \ \frac{11}3+14 = 17.666\ldots$$

Now we are going to present two upper bounds on $\opt(D)$.
%Schwartges, Spoerhase, and Wolff~\cite{SchwartgesSW2012} presented a 2-approximation for \maxleaves\ on rooted \Dag s.
%Our algorithm is a particular execution of theirs, 
%where~$F_2$ is the output of their routine \texttt{Expansion} on $D$.  
%(In ours, we prioritize expanding vertices with out-degree at least three.)
%Due to this fact, the following upper bound holds.
The following upper bound holds because the branching $F_2$ could be produced as output of the \Call{Expansion}{} algorithm from Schwartges, Spoerhase, and Wolff~\cite{SchwartgesSW2012}.

\begin{lemma}[Lemma~5~\cite{SchwartgesSW2012}]\label{lem:theirupper}
    It holds that $\opt(D) \leq N_2 - k_2 + 1$.
\end{lemma}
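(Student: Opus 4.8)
The plan is to recast the bound in terms of internal vertices and then exploit maximality. Write $n=|V(D)|$. Recall from the proof of Lemma~\ref{lem:lower} that $F_2$ has exactly $c_2 := n-N_2+k_2$ components, one rooted at each vertex of in-degree~$0$ in $F_2$; consequently $N_2-k_2 = |A(F_2)| = n-c_2$ is precisely the number of non-root (in-degree-$1$) vertices of $F_2$. Since any spanning arborescence $T^*$ of $D$ has exactly $n-\ell(T^*)$ internal vertices, the claim $\opt(D)\le N_2-k_2+1$ is equivalent to: every spanning arborescence of $D$ has at least $c_2-1$ internal vertices. This reformulation is the statement I would establish, and it already signals where the argument must bite, since for a non-maximal $2$-branching (say the one with no arcs) the inequality is false; hence the proof is forced to use the stopping condition of \Call{GreedyExpand}{}, i.e.\ the maximality of $F_2$.

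First I would fix an optimal spanning arborescence $T^*$ and let $R$ be the set of the $c_2$ roots of the components of $F_2$, that is, the vertices of in-degree~$0$ in $F_2$. The root $r$ of $D$ lies in $R$ and is internal in $T^*$ (in the only nontrivial case $n\ge 2$), so it suffices to assign injectively, to each of the remaining $c_2-1$ roots, a distinct internal vertex of $T^*$. The natural candidate is the map sending $\rho\in R\setminus\{r\}$ to its parent $p(\rho)$ in $T^*$: this parent exists and is internal, and the arcs $p(\rho)\rho$ have pairwise distinct heads. If their tails $p(\rho)$ were pairwise distinct as well, the injection would be complete.

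The main obstacle is collisions, where a single internal vertex $w$ of $T^*$ is the $T^*$-parent of several roots $\rho_1,\dots,\rho_m\in R$ with $m\ge 2$. Here maximality does part of the work: each $\rho_i$ is an out-neighbour of $w$ in $D$ (via $w\rho_i\in A(T^*)\subseteq A(D)$) and has in-degree~$0$ in $F_2$, so $w$ has at least $m\ge 2$ out-neighbours of in-degree~$0$ in $F_2$. By the maximality rule for $t=2$, a vertex of out-degree~$0$ in $F_2$ can have at most one such out-neighbour; therefore every colliding $w$ must itself be internal in $F_2$, with out-degree at least~$2$. Thus collisions can occur only at vertices lying strictly inside the nontrivial components of $F_2$.

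This leaves the redistribution step, which I expect to be the technical heart of the argument and the place where acyclicity of $D$ is indispensable. One needs to absorb, for each colliding vertex $w$, its $m-1$ ``excess'' roots into distinct internal vertices of $T^*$ that are not already used as parents. Since each such $w$ is the root of a nontrivial $F_2$-subtree and the subtrees of the branching $F_2$ are pairwise nested or disjoint, I would process colliding vertices along a topological order of $D$ and try to route each excess charge to an as-yet-uncharged internal-in-$T^*$ vertex of the corresponding subtree, using acyclicity to ensure the charges never return to an earlier vertex and so can be kept injective. Verifying that enough internal-in-$T^*$ vertices are genuinely available inside the relevant subtrees, and that no vertex is charged twice, is the delicate point I anticipate as the main difficulty; this redistribution is exactly the content of Lemma~5 of Schwartges, Spoerhase, and Wolff~\cite{SchwartgesSW2012} on which we rely here.
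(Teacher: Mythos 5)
There is a genuine gap, and it sits exactly where you placed your bets: the invariant you propose to work from is too weak, so the redistribution step you defer cannot be carried out. You assert that the proof is forced to use only the maximality of $F_2$, but the bound is simply false for arbitrary maximal spanning $2$-branchings. Take $D$ with vertices $r,w,a,b,\rho_1,\rho_2$ and arcs $rw$, $wa$, $wb$, $w\rho_1$, $w\rho_2$, and take $F'=\{wa,wb\}$. This $F'$ is a maximal spanning $2$-branching (its only internal vertex is $w$; the only out-degree-$0$ vertex with out-neighbors is $r$, which has exactly one out-neighbor of in-degree $0$), and it satisfies $N-k+1=3-1+1=3$; yet the unique spanning arborescence of $D$ has $4$ leaves. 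In this example your parent map sends both $\rho_1$ and $\rho_2$ to the $F'$-internal vertex $w$, and $T^*$ has only two internal vertices ($r$ and $w$) against the three roots in $R\setminus\{r\}$: there is literally nothing left to redistribute to. So no charging along a topological order can close the gap from maximality alone; moreover, your appeal to Lemma~5 of~\cite{SchwartgesSW2012} for the ``redistribution'' is circular, since that lemma \emph{is} the bound being proven (for outputs of their \textsc{Expansion} algorithm), not a redistribution subroutine.

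What the lemma actually needs---and the reason the paper offers no proof at all, only the remark that $F_2$ could be produced as an output of the \textsc{Expansion} algorithm of~\cite{SchwartgesSW2012}---is a property strictly stronger than maximality: whenever \textsc{GreedyExpand} or \textsc{MaxExpand} expands a vertex $v$, it adds \emph{all} arcs of $U_v$, and in-degrees never decrease; hence every \emph{internal} vertex of $F_2$ has \emph{no} out-neighbor of in-degree $0$ in $F_2$ (any out-neighbor of $v$ not made a child already had in-degree at least $1$ when $v$ was expanded). My counterexample $F'$ violates exactly this, and it is this property, not maximality, that makes $F_2$ a legitimate \textsc{Expansion} output. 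Combined with maximality (each out-degree-$0$ vertex of $F_2$ has at most one out-neighbor of in-degree $0$), it yields that \emph{every} vertex of $D$ has at most one out-neighbor of in-degree $0$ in $F_2$. With that in hand, your own first step already finishes the proof: each $\rho\in R\setminus\{r\}$ has in-degree $0$ in $F_2$ and $p(\rho)\rho\in A(T^*)\subseteq A(D)$, so the parent map $p$ is injective, its image consists of internal vertices of $T^*$, hence $T^*$ has at least $c_2-1$ internal vertices and $\opt(D)\le n-(c_2-1)=N_2-k_2+1$. No collisions occur, and no redistribution is needed.
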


The next lemma is the key for the approximation ratio analysis. 

% \newpage

\begin{lemma}\label{lem:ourupper}
    It holds that $\opt(D) \leq \dfrac{N_1-k_1}2 + \dfrac{N_2-k_2}2 + 1$.
\end{lemma}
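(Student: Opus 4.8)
The plan is to fix an optimal spanning arborescence $T^*$ of $D$, with $\opt(D) = \ell(T^*)$, and to bound its number of internal vertices from below. Write $n = |V(D)|$, let $R_i$ be the set of vertices of in-degree $0$ in $F_i$ (the roots of the components of $F_i$), and let $m = |M|$ be the size of the matching computed inside \textsc{MaxExpand}. Since $N_i - k_i$ is exactly the number of arcs of $F_i$, and each of the $m$ matched expansions adds exactly two arcs, we have $N_2 - k_2 = (N_1 - k_1) + 2m$ and $|R_2| = n - (N_2 - k_2)$. Hence the claimed inequality is equivalent to the statement that the number of internal vertices of $T^*$ is at least $|R_2| + m - 1$, and I would reduce the whole lemma to producing that many distinct internal vertices.

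First I would reprove Lemma~\ref{lem:theirupper} through an explicit injection, which also fixes the notation for the harder step. Map each root $u \in R_2 \setminus \{r\}$ to its $T^*$-parent $p_{T^*}(u)$, which is internal. This map is injective: if some vertex $q$ were the $T^*$-parent of two distinct roots $x_1, x_2 \in R_2$, then $x_1, x_2$ would be two out-neighbors of $q$ lying in $R_2 \subseteq R_1$, so $q$ would have out-degree $0$ in $F_1$ with $U_q = \{x_1, x_2\}$, i.e. $q$ would be a candidate contributing an edge $e_q = x_1 x_2$ to $G$. The key observation is that the matched vertices are exactly the endpoints of matched edges, all of which lie in $R_1 \setminus R_2$; therefore the \emph{free} vertices of $M$ are precisely the elements of $R_2$. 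So $x_1$ and $x_2$ are both free and $M + e_q$ is a larger matching, contradicting the maximality of $M$. This already gives $|R_2| - 1$ distinct internal vertices.

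The heart of the argument is to gain the extra $m$ internal vertices, one per matched edge. For $e_v \in M$ with $U_v = \{u, w\}$, the vertex $v$ has out-degree $\ge 2$ in $F_2$, its only out-neighbors in $R_1$ are $u$ and $w$, and after the expansion neither lies in $R_2$; hence $v$ has no out-neighbor in $R_2$ and cannot be the image of any root under the map above. When $v$ is internal in $T^*$ I would simply charge $e_v$ to $v$, and these charges are disjoint from one another and from the previous image. The plan is then to combine everything into a single injection whose domain is $R_2 \setminus \{r\}$ together with one representative endpoint of each matched edge, of total size $|R_2| - 1 + m = |R_1| - m - 1$, again defined by taking $T^*$-parents; any collision would again force a candidate edge $e_q$ of $G$, which I would argue must produce an augmenting path for $M$, contradicting that $M$ is maximum.

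The main obstacle is exactly a matched edge $e_v = uw$ whose candidate vertex $v$ is a \emph{leaf} of $T^*$: then $u$ and $w$ are reached in $T^*$ through parents other than $v$, and the charge cannot go to $v$. In this situation a collision edge $e_q$ may join a free vertex of $R_2$ to the non-excluded endpoint of another matched edge, yielding only a length-two alternating path that does not close up, so maximumness of $M$ alone does not immediately give a contradiction. I expect this to be where the real work lies, and where both remaining hypotheses must enter: the acyclicity of $D$ (so that any arc of $U_v$ points forward in a topological order, allowing an exchange that reroutes $u$ or $w$ to become a child of $v$ without creating a cycle) and the maximumness of $M$ (to rule out the bad alternating structures). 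Concretely, I would either pass to an optimal arborescence minimizing the number of matched candidates that are leaves and show a valid reassignment is always forced, or choose for each matched edge which endpoint to exclude so that the parent-map stays injective, proving feasibility of that choice by an augmenting-path argument in $G$. Resolving this leaf case cleanly is the crux; the rest is bookkeeping with the identities $N_i - k_i = |A(F_i)|$ and $N_2 - k_2 = (N_1 - k_1) + 2m$.
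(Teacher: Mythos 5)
Your setup is correct: $N_i-k_i=|A(F_i)|$, $N_2-k_2=(N_1-k_1)+2m$, and the lemma is indeed equivalent to showing that $T^*$ has at least $|R_2|+m-1$ internal vertices. The injection on $R_2\setminus\{r\}$ is also essentially right (modulo one fact you assert without proof: that an \emph{internal} vertex of $F_1$ has no out-neighbour of in-degree $0$ in $F_1$; this holds because \textsc{GreedyExpand} adds all of $U_v$ whenever it expands $v$), and so is the observation that a matched candidate $v$ is never the $T^*$-parent of a root of $F_2$ or of an endpoint of another matched edge. The genuine gap is exactly where you place it yourself: for a matched edge $e_v=uw$ whose candidate $v$ is a \emph{leaf} of $T^*$, you never actually produce the extra internal vertex. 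Everything you say about this case is conditional (``I would argue'', ``I expect this to be where the real work lies''), and neither of the two strategies you float is carried out. This is not a cosmetic omission: if every matched candidate happens to be a leaf of $T^*$, your argument as written yields only the $|R_2|-1$ internal vertices of the first injection, i.e.\ $\opt(D)\le N_2-k_2+1$, which is just Lemma~\ref{lem:theirupper} and strictly weaker than the claimed bound.

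For what it is worth, your plan can be completed, and comparing it with the paper is instructive. Since every vertex has at most two $T^*$-children of in-degree $0$ in $F_1$ (by maximality of $F_1$ together with the fact quoted above), the collision pairs of the parent map form a matching $K$ of edges of $G$, and every collision pair joining a free vertex to a matched vertex, or vertices of two distinct matched edges, is an edge of $E(G)\setminus M$; hence $K\cup M$ decomposes into alternating paths and even cycles, a valid choice of representatives exists on every cycle and on every path with at most one free end, and a path with two free ends is an $M$-augmenting path in $G$, contradicting that $M$ is maximum. That chaining argument is the piece your sketch is missing. The paper sidesteps this difficulty entirely: instead of an injection it maps each $z\in Z$ to a witness (the closest proper predecessor of $z$ in $T^*$ lying in a non-trivial component of $F_1$), tolerates collisions, and bounds the total excess by showing that the out-degree-two vertices of the subarborescences $T^*_s$ correspond to a matching $M_C$ of $G$, whence $|Z|-w=|C|=|M_C|\le|M|$, which is exactly inequality~\eqref{eq:main}. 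So your route is genuinely different (parents plus representative choice plus augmenting paths, versus witnesses plus a cardinality bound), but as submitted it stops short of the one step that makes the lemma true.
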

\begin{proof}
    We apply on $F_1$ the same definition of witness that Schwartges, Spoerhase, and Wolff~\cite{SchwartgesSW2012} used in their proof of Lemma~\ref{lem:theirupper}. 
    Let $T^*$ be a spanning arborescence of $D$ with the maximum number of leaves.  
    Call $R$ the set of all roots of non-trivial components of $F_1$. 
    Call $L$ the set of leaves of $T^*$ that are isolated vertices of $F_1$. 
    Let $Z := L \cup R \setminus \{r\}$, where $r$ is the root of $D$. 
    See Figure~\ref{fig:opt}.
    The witness of a vertex $z \in Z$ is the closest proper predecessor $q(z)$ of $z$ in $T^*$ which is in a non-trivial component of $F_1$.  
    Note that each witness is an internal vertex of $T^*$.
    % that lies in a non-trivial component of $F_1$. 
    These witnesses will not necessarily be pairwise distinct, as in~\cite{SchwartgesSW2012}.
    See Figure~\ref{fig:witnesses}.

\begin{figure}[h]
\centering
\begin{subfigure}{.47\textwidth}
  \includegraphics[width=.98\textwidth,center]{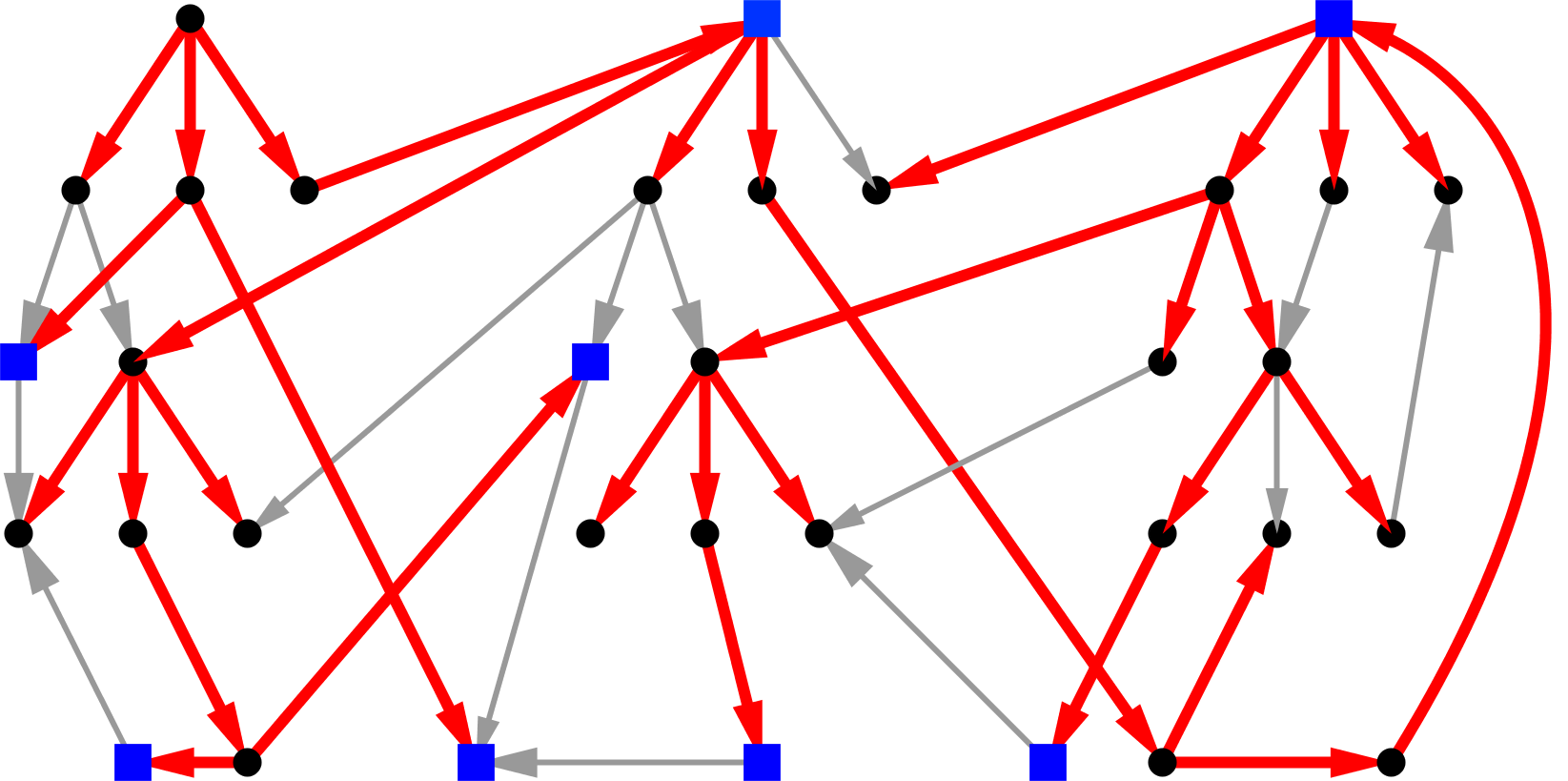}
  \caption{The red and bold arcs show $T^*$, and blue and square vertices show the set~$Z$.}
  \label{fig:opt}
\end{subfigure}
\hspace{5mm}
\begin{subfigure}{.47\textwidth}
  \includegraphics[width=.98\textwidth,center]{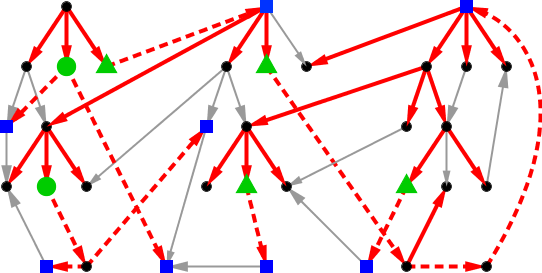}
  \caption{Path between each vertex in $Z$ and its witness in dashed arcs.}
\label{fig:witnesses}
\end{subfigure}
  \caption{The green triangular vertices are the witnesses for a vertex in $Z$ and the green big circles mark two vertices that are the witnesses for two vertices in~$Z$.}
  \label{fig:maxexpand}
\end{figure}

% \begin{figure}[h]
% \centerline{\includegraphics[height=3cm]{figs/witnesses.png}}
% \caption{The red and bold arcs show an arborescence $T^*$ with the maximum number of leaves. 
% The blue and square vertices form the set $Z$ in the proof of Lemma~\ref{lem:ourupper}.
% The green and triangular vertices are the witness for some vertex in $Z$, while the
% green and big circles mark two vertices that are the witness for two vertices in $Z$.}
% \label{fig:witnesses}
% \end{figure}

    We will prove that the number $w$ of distinct witnesses is
    \begin{equation}
    \label{eq:main}
        w \ \geq \ |Z| - \frac{N_2-k_2}{2} + \frac{N_1-k_1}{2} 
            \ = \ k_1 - 1 + |L| - \frac{N_2-k_2}{2} + \frac{N_1-k_1}{2} \; .
    \end{equation}
    From this, because each witness lies in a non-trivial component of~$F_1$ and is internal in $T^*$, we deduce that
    \begin{align*}
        \opt(D) & \ \leq \ N_1 - \left(k_1 - 1 + |L| - \frac{N_2-k_2}{2} + \frac{N_1-k_1}{2}\right) + |L| \\
                & \ = \ N_1 - k_1 + \frac{N_2-k_2}{2} - \frac{N_1-k_1}{2} + 1 \ = \ \frac{N_1 - k_1}2 + \frac{N_2-k_2}{2} + 1 \; .
    \end{align*}
    It remains to prove~\eqref{eq:main}.

    For a witness $s$, let $Z_s := \{ z \in Z: q(z) = s\}$.  
    Let $T^*_s$ be the subarborescence of~$T^*$ induced by the union of all paths in $T^*$ from $s$ to each vertex in~$Z_s$.  
    The number of such arborescences $T^*_s$ is exactly $w$.  
    Note that the only internal vertex of~$T^*_s$ that is in a non-trivial component of~$F_1$ is its root $s$, which is necessarily a leaf of~$F_1$ (because there is no arc from an internal vertex of~$F_1$ to vertices out of $F_1$).
    Thus the maximum out-degree in~$T^*_s$ is at most two. 

    First let us argue that no $z$ in $Z_s$ is a predecessor in $T^*_s$ of another $z'$ in~$Z_s$. 
    Suppose by contradiction that $z$ is in the path from $s$ to $z'$.  
    Then $z$ is not a leaf of $T^*$ and therefore $z$ is in $R$, and thus is in a non-trivial component of~$F_1$.  
    This leads to a contradiction because~$z$, and not~$s$, would be the witness for~$z'$.
    Therefore $T^*_s$ has exactly $|Z_s|$ leaves.  

    Let $B$ be the set of vertices $v$ such that $e_v \in M$, where $M$ is the maximum matching in the multigraph $G$ computed during the execution of \mbox{\Call{MaxExpand}{$D$,~$F_1$}.}  
    Observe that $|M|$ is exactly the number of leaves lost from branching~$F_1$ to $F_2$, given by~\eqref{eq:leaveslostF1toF2}, so
    \begin{equation}\label{eq:B}
        |M| \ = \ |B| \ = \ \frac{N_2-k_2}2 - \frac{N_1-k_1}2 \; .
    \end{equation}
    Now let us argue that the vertices with out-degree two in $T^*_s$ are all in the set~$\Candidates$, defined 
    in Algorithm~\ref{alg:3}.
    Let~$v$ be one such vertex.
    Either~$v$ is an isolated vertex or~$v$ is a leaf of a non-trivial component of~$F_1$.  
    Therefore~$d_{F_1}^+(v) = 0$.  
    As the two children of $v$ in $T^*_s$ have in-degree~0 in~$F_1$, both are in~$U_v$. 
    Hence~$v \in \Candidates$. 

    Let $C_s$ be the set of vertices of $\Candidates$ with out-degree two in~$T^*_s$ and $C = \cup_s C_s$.  
    Then the number of leaves in~$T^*_s$ is $|Z_s| = |C_s|+1$. 
    The set of internal vertices of~$T^*_s$ and of~$T^*_{s'}$ are disjoint for distinct witnesses~$s$ and~$s'$.  
    Thus the sets~$C_s$ and $C_{s'}$ are disjoint.  
    Let $M_C$ be the set of edges of $G$ corresponding to the vertices in $C$.  
    Note that $M_C$ is a matching, so~$|C| = |M_C| \leq |M| = |B|$. 
    Hence
    \[ |Z| \ = \ \sum_s |Z_s| \ = \ \sum_s (|C_s|+1) \ = \ |C| + w \ \leq \ |B| + w \; . \]
    Therefore $w \geq |Z| - |B| = k_1 - 1 + |L| - (\frac{N_2-k_2}2 - \frac{N_1-k_1}2)$, as in~\eqref{eq:main}.
\end{proof}

Continuing with our example, if $D$ is the dag depicted in Figure~\ref{fig:branching}, 
then Lemma~\ref{lem:theirupper} implies that $\opt(D) \leq 27$, 
while Lemma~\ref{lem:ourupper} implies that ${\opt(D) \leq 25}$. 

% \newpage

\begin{theorem}\label{thm:32}
    Algorithm \Call{MaxLeaves}{} is a $\frac32$-approximation for the \maxleaves\ on rooted directed acyclic graphs.
\end{theorem}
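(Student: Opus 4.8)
The plan is to obtain the approximation guarantee by pure arithmetic, combining the single lower bound on $\ell(T)$ from Lemma~\ref{lem:lower} with the two upper bounds on $\opt(D)$ from Lemmas~\ref{lem:theirupper} and~\ref{lem:ourupper}; no new structural argument is needed, since the real work was already done in the witness-and-matching proof of Lemma~\ref{lem:ourupper}. Before the estimate I would observe that \Call{MaxLeaves}{} runs in polynomial time: each call to \Call{GreedyExpand}{} is a single pass over $V(D)$, and \Call{MaxExpand}{} builds a multigraph of polynomial size and then invokes a polynomial-time maximum matching routine~\cite{Edmonds1965}; that the returned $T$ is a spanning arborescence was already argued in Section~\ref{sec:algo}. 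Since \maxleaves\ is a maximization problem, it suffices to prove $\ell(T) \geq \frac23\,\opt(D)$, equivalently $\opt(D) \leq \frac32\,\ell(T)$.

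To keep the computation readable I would abbreviate $a := N_1 - k_1$ and $b := N_2 - k_2$, both nonnegative. In this notation Lemma~\ref{lem:lower} reads $\ell(T) \geq \frac{a}{6} + \frac{b}{2} + 1$, while Lemmas~\ref{lem:theirupper} and~\ref{lem:ourupper} read $\opt(D) \leq b + 1$ and $\opt(D) \leq \frac{a}{2} + \frac{b}{2} + 1$. The crucial step is to take the two upper bounds with \emph{equal} weight: averaging them gives
\[
  \opt(D) \ \leq \ \frac12(b+1) + \frac12\Big(\frac{a}{2} + \frac{b}{2} + 1\Big) \ = \ \frac{a}{4} + \frac{3b}{4} + 1 \; ,
\]
and multiplying by $\frac23$ yields $\frac23\,\opt(D) \leq \frac{a}{6} + \frac{b}{2} + \frac23$. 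Comparing this with the lower bound, the coefficients of $a$ and of $b$ coincide exactly with those in Lemma~\ref{lem:lower}, and the only discrepancy is in the constant term, where $\frac23 \leq 1$. Hence
\[
  \tfrac23\,\opt(D) \ \leq \ \frac{a}{6} + \frac{b}{2} + \frac23 \ \leq \ \frac{a}{6} + \frac{b}{2} + 1 \ \leq \ \ell(T) \; ,
\]
which rearranges to the desired $\opt(D) \leq \frac32\,\ell(T)$.

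I expect the main (and essentially only) obstacle to be choosing how to combine the two upper bounds. The equal-weight average is precisely what makes the $a$- and $b$-coefficients of $\frac23\,\opt(D)$ match those of the lower bound, so both coefficient comparisons become equalities and all that survives is the harmless $\frac23$-versus-$1$ gap in the additive constant. One could alternatively run the argument from Lemma~\ref{lem:ourupper} alone together with the inequality $N_2-k_2 \geq N_1-k_1$ coming from~\eqref{eq:B}, but averaging sidesteps invoking that relation and keeps the estimate symmetric in the two bounds. Everything else is bookkeeping already encapsulated in the three lemmas.
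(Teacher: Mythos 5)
Your proof is correct and is essentially the paper's own argument: the paper likewise combines Lemma~\ref{lem:lower} with an equal-weight combination of Lemmas~\ref{lem:theirupper} and~\ref{lem:ourupper}, just written in the reverse direction (splitting $\frac{N_2-k_2}{2}$ as $\frac{N_2-k_2}{6}+\frac{N_2-k_2}{3}$ and lower-bounding $\ell(T)$ by $\frac{\opt(D)-1}{3}+\frac{\opt(D)-1}{3}+1 > \frac23\,\opt(D)$), which is the same linear combination as your averaging step. Your added remarks on polynomial running time and the alternative via $N_2-k_2 \geq N_1-k_1$ are fine but not part of the paper's proof.
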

\begin{proof}
    For a rooted \Dag\ $D$, let $T$ be the output of \Call{MaxLeaves}{$D$}. 
    Then
    \begin{align}
        \ell(T) & \ \geq \ \frac{N_1{-}k_1}6 + \frac{N_2{-}k_2}2 + 1 \label{eq1} \\
                & \ = \ \frac{N_1{-}k_1}6 + \frac{N_2{-}k_2}6 + \frac{N_2{-}k_2}3 + 1 \nonumber \\
                & \ \geq \ \frac{\opt(D){-}1}3 + \frac{\opt(D){-}1}3 + 1 \label{eq2} \\
%                & \ \geq 2 \ \,\frac{\opt(D)-1}3 + 1 \nonumber \\
                & \ > \ 2\,\frac{\opt(D)}3 \; , \nonumber
    \end{align}
    where~\eqref{eq1} holds by Lemma~\ref{lem:lower} and~\eqref{eq2} holds by Lemmas~\ref{lem:theirupper} and~\ref{lem:ourupper}.
\end{proof}

The bound given in Theorem~\ref{thm:32} is tight.  
Indeed, an example similar to the one by Schwartges, Spoerhase, and Wolff~\cite{SchwartgesSW2012} for their algorithm proves that algorithm \Call{MaxLeaves}{} can achieve ratios arbitrarily close to $3/2$. 
See Figure~\ref{fig:tight}.

\begin{figure}[ht]
    \centerline{\includegraphics[height=4cm]{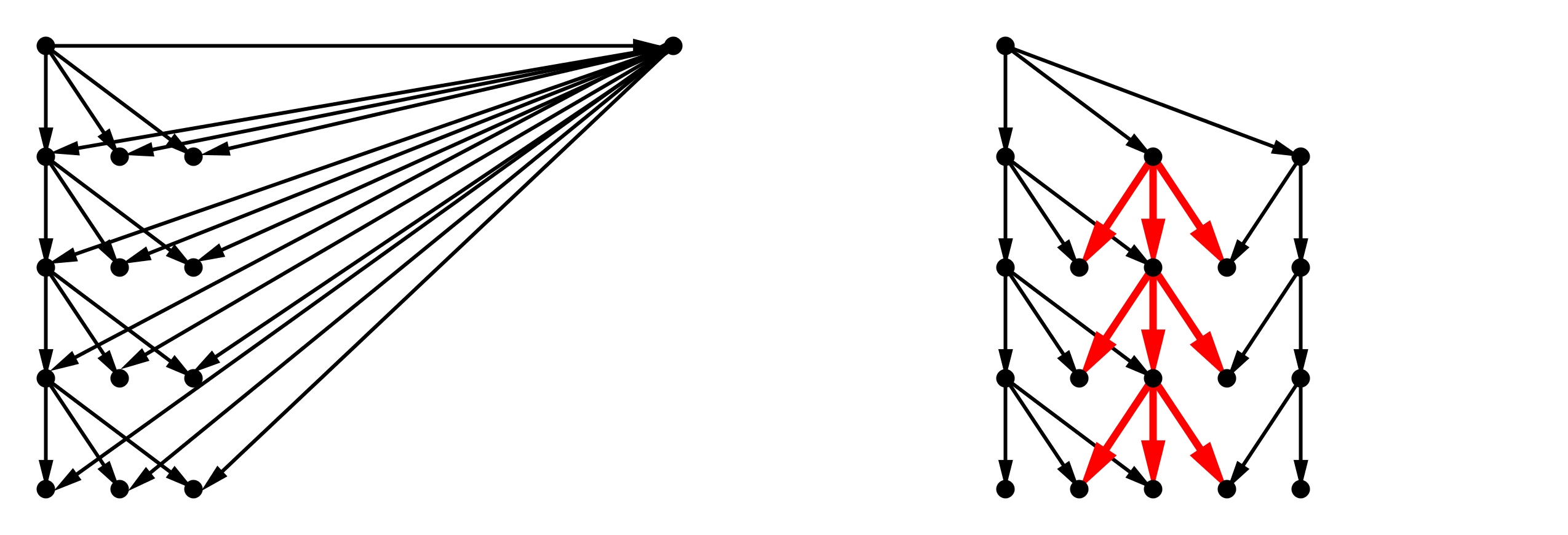}}
    \caption{In the left, a rooted dag with $n=3k+2$ vertices, for $k=4$, where {\sc MaxLeaves} can produce an arborescence with $2k+2$ leaves and $\opt = 3k$ leaves.
    In the right, an example where the 3-expansions in red and bold damage all possible 2-expansions that exist in the optimal arborecence.}
    \label{fig:tight}
\end{figure}

%%%%%%%%%%%%%%%%%%%%%%%%%%%%%%%%%%%%%%%%%%%%%%%%%%%%%%%%%%%%%%%%%%%%%%%%%%%%%%%%%%%%%%%%%%%%%%%%%%%%%%%%%%%%%%%%%%%%%%%%%%%%%%%%%%%%%%%%%%%%%%%%%%%%%%%
%%%%%%%%%%%%%%%%%%%%%%%%%%%%%%%%%%%%%%%%%%%%%%%%%%%%%%%%%%%%%%%%%%%%%%%%%%%%%%%%%%%%%%%%%%%%%%%%%%%%%%%%%%%%%%%%%%%%%%%%%%%%%%%%%%%%%%%%%%%%%%%%%%%%%%%
%%%%%%%%%%%%%%%%%%%%%%%%%%%%%%%%%%%%%%%%%%%%%%%%%%%%%%%%%%%%%%%%%%%%%%%%%%%%%%%%%%%%%%%%%%%%%%%%%%%%%%%%%%%%%%%%%%%%%%%%%%%%%%%%%%%%%%%%%%%%%%%%%%%%%%%
\section{Using approximations for 3-dimensional matching}
\label{sec:3dmatching}

The problem known as \DM, from 3-dimensional matching, consists in the following. 
Given a finite set $U$ and a collection $\calS$ of subsets of~$U$ with three elements each, find a collection $\calS' \subseteq \calS$ of pairwise disjoint sets with as many sets as possible.
The name of the problem comes from the fact that one can define a 3-regular hypergraph on the vertex set~$U$ whose edges are the sets in~$\calS$, and the collection~$\calS'$ is a maximum matching in such hypergraph.  

This problem is \NP-hard~\cite{GareyJ1979}, and there exists a $\frac43$-approximation for it~\cite{Cygan2013,FurerY2014} as well as a~$(2+\eps)$-approximation for any $\eps>0$, for its weighted version~\cite{ArkinH1998,Berman2000}. 

The strategy in \Call{MaxLeaves}{} can be generalized by using an approximation algorithm for \DM.
One possibility is, for a rooted dag $D$, to call \Call{GreedyExpand}{$D$, $4$, $F_0$} with the empty spanning branching $F_0$, obtaining~$F_1$, then to use the $\frac43$-approximation for \DM\ to expand~$F_1$ with a good set of $3$-expansions, resulting in a branching $F_2$.
Then we can proceed as in \Call{MaxLeaves}{}, calling \Call{MaxExpand}{$D$, $F_2$} to obtain~$F_3$, and \Call{GreedyExpand}{$D$, $1$, $F_3$} to obtain the final arborescence $T$.
This idea however will not give an approximation better than~$\frac32$.
See the example to the right in Figure~\ref{fig:tight}.
It shows that, to achieve a ratio better than $\frac32$, the choice of a good 3-dimensional matching has to somehow take into consideration the 2-expansions.

The following weighted variant of the previous idea takes into account 2-expansions and 3-expansions simultaneously.
This variant does not give an improvement, but it can have implications for \maxleaves\ on rooted dags if a better approximation for the weighted variant of \DM\ is designed.  
So we describe it ahead.

\newcommand{\FF}{F^{w}}
\newcommand{\NN}{N^{w}}
\newcommand{\kk}{k^{w}}

The second possibility we investigated makes use of weights in the following way.
We start by calling \Call{GreedyExpand}{\mbox{$D$, $4$, $\FF_0$}} with the empty spanning branching~$\FF_0$, obtaining~$\FF_1$.  
After that, we create an instance of the weighted \DM\ where feasible $3$-expansions turn into sets of weight two and feasible $2$-expansions turn into sets of weight one.
We then use an approximation for the weighted \DM\ to obtain a branching~$\FF_3$ from~$\FF_1$. 
We finish by calling \Call{GreedyExpand}{$D$, $1$, $\FF_3$} to obtain an arborescence.
This algorithm, named \Call{Maxleaves-W3DM}{}, is formalized in Algorithm~\ref{alg:maxleaves-w3dm}.
It makes use of a procedure named \Call{ApproxMaxWeighted3DMatching}{$\calS$, $w$}, that receives a collection $\calS$ of sets of size two or three, and a weight function~$w$ defined on $\calS$.  
The procedure returns a subcollection of $\calS$ that consists of pairwise disjoint sets.
The weight of this subcollection depends on the approximation guarantee of the procedure.

% We note that, 
To convert $\calS$ into a collection of sets of size three, as in an instance of \DM, one can add a new element to each set of size two, keeping the same weights.  
For simplicity, we refrain from doing this, and in the proof of Lemma~\ref{lemma:w3dm_upper_opt} ahead, we abuse notation and refer to subcollections of $\calS$ of pairwise disjoint sets as 3D-matchings of $\calS$.

Next lemma corresponds to Lemma~\ref{lem:lower}, and their proofs are quite similar.
Let~$\FF_1$, $\FF_2$, $\FF_3$, and $T$ be the branchings produced during the call \Call{MaxLeaves-W3DM}{$D$}.
For~$i=1,2,3$, let~$\kk_i$ be the number of non-trivial components of~$\FF_i$ and~$\NN_i$ be the number of vertices in such components.

% \newpage
% \vspace{5mm}

\begin{algorithm}
\begin{algorithmic}
\Require{rooted acyclic directed graph $D$}
\Ensure{spanning arborescence of $D$}

\State let $\FF_0$ be the spanning branching with no arcs
\State $\FF_1 \gets$ \Call{GreedyExpand}{$D$, $4$, $\FF_0$}
\For{each $v \in V(D)$ such that $d_{\FF_1}^+(v)=0$}
    \State $U_v \gets \{vu \in A(D) : d_{\FF_1}^-(u)=0\}$
\EndFor
\State $\Candidates \gets \{v \in V(D) : d_{\FF_1}^+(v)=0 \mbox{ and } 2 \leq |U_v| \leq 3\}$
\State $\calS \gets \{U_v : v \in \Candidates\}$
\State $\calS \gets \calS \cup \{\{a,b\},\{a,c\},\{b,c\} : v \in \Candidates \mbox{ and } |U_v| = \{a,b,c\}\}$
\State $w(U) \gets |U|-1$ for each $U \in \calS$
\State $M \gets$ \Call{ApproxMaxWeighted3DMatching}{$\calS$, $w$}
\State $\FF_2 \gets \FF_1$
\For{each $U \in M$ such that $|U| = 3$}
    \State $\FF_2 \gets \FF_2 + U$
\EndFor
\State $\FF_3 \gets \FF_2$
\For{each $U \in M$ such that $|U| = 2$}
    \State $\FF_3 \gets \FF_3 + U$
\EndFor
\State $T \gets$ \Call{GreedyExpand}{$D$, $1$, $\FF_3$}
\State \Return $T$
\end{algorithmic}
\caption{\textsc{MaxLeaves-W3DM}($D$)}
\label{alg:maxleaves-w3dm}
\end{algorithm}

\newpage

\begin{lemma}
\label{lem:lower_w3dm}
    Let $T$ be the arborescence produced by \Call{MaxLeaves-W3DM}{$D$}.
    Then \[\ell(T) \ \geq \ \frac{\NN_1-\kk_1}{12} + \frac{\NN_2-\kk_2}{6} + \frac{\NN_3-\kk_3}{2} + 1 \; .\]
\end{lemma}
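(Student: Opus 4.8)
The plan is to follow the proof of Lemma~\ref{lem:lower} closely, bounding how many leaves survive as the algorithm passes through $\FF_1$, $\FF_2$, $\FF_3$, and finally $T$. Write $n$ for the number of vertices of $D$. A convenient observation to use throughout is that, for any spanning branching $F$ of $D$, the quantity $N(F)-k(F)$ equals the number of arcs of $F$: each non-trivial component is an arborescence contributing $|V|-1$ arcs, while trivial components contribute none. In particular $\NN_i-\kk_i$ is exactly the number of arcs of $\FF_i$, which makes the per-expansion counting below transparent. As in the analysis of \Call{MaxExpand}{}, acyclicity guarantees that each $\FF_i$ is indeed a spanning branching, so these component counts are well defined.

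First I would lower bound $\ell(\FF_1)$. Since $\FF_1$ is produced by \Call{GreedyExpand}{$D$, $4$, $\FF_0$}, every internal vertex of each non-trivial arborescence $T_j$ of $\FF_1$ has out-degree at least $4$. Counting arcs, $|V(T_j)|-1 \ge 4\bigl(|V(T_j)|-\ell(T_j)\bigr)$, so $\ell(T_j)\ge\frac{1+3|V(T_j)|}{4}$. Summing over the $\kk_1$ non-trivial components and adding the $n-\NN_1$ isolated vertices gives $\ell(\FF_1)\ge n-\frac{\NN_1-\kk_1}{4}$. This is the only step where the out-degree threshold $4$ (rather than the $3$ used in Lemma~\ref{lem:lower}) enters.

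Next I would account for the leaves lost in each transition. Because the sets returned by \Call{ApproxMaxWeighted3DMatching}{} are pairwise disjoint and each candidate $v$ contributes to $M$ at most one of its associated sets, every expansion performed has a distinct source vertex; each such source has out-degree $0$ immediately before its expansion and becomes internal afterwards, while the target vertices do not change status. Hence each expansion costs exactly one leaf and no leaf is counted twice. From $\FF_1$ to $\FF_2$ only $3$-expansions are applied, each adding three arcs, so their number, and thus the number of leaves lost, is $\frac{(\NN_2-\kk_2)-(\NN_1-\kk_1)}{3}$. Likewise, from $\FF_2$ to $\FF_3$ only $2$-expansions are applied, losing $\frac{(\NN_3-\kk_3)-(\NN_2-\kk_2)}{2}$ leaves. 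Finally, since $\FF_3$ has $n-\NN_3+\kk_3$ components and each vertex turned internal by the closing call \Call{GreedyExpand}{$D$, $1$, $\FF_3$} reduces the number of components by at least one, at most $n-(\NN_3-\kk_3)-1$ leaves are lost in passing to the spanning arborescence $T$.

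Subtracting these three losses from the bound on $\ell(\FF_1)$ and collecting the coefficients of $\NN_1-\kk_1$, $\NN_2-\kk_2$, and $\NN_3-\kk_3$, which simplify to $\frac1{12}$, $\frac16$, and $\frac12$, yields the claimed inequality. I expect the only delicate point to be the leaf-counting: verifying that each performed expansion costs exactly one leaf, and that the final $1$-expansion internalizes at most $n-(\NN_3-\kk_3)-1$ vertices even though $\FF_3$ need not be a maximal $2$-branching (the weighted matching may leave some $2$-expansions unused). The bound on the last step holds regardless of maximality, since an upper bound on the number of internalized vertices already yields the desired lower bound on $\ell(T)$; the remaining algebra is routine.
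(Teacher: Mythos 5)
Your proof is correct and follows essentially the same route as the paper's: lower-bound $\ell(\FF_1)$ using the out-degree-$4$ condition, then subtract the leaves lost in the three subsequent phases, counting each $3$-expansion and $2$-expansion as costing exactly one leaf (your bookkeeping via arcs, $\NN_i-\kk_i$ being the arc count, is equivalent to the paper's bookkeeping via component counts). Your one refinement---noting that the final call to \textsc{GreedyExpand} loses \emph{at most} $n-(\NN_3-\kk_3)-1$ leaves because $\FF_3$ need not be a maximal $2$-branching, whereas the paper asserts this loss is exact---is a valid and slightly more careful reading, and it does not change the argument since only the upper bound on the loss is needed.
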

\begin{proof}
    Let $n$ be the number of vertices of $D$. 
    Let $T_1,\ldots,T_{\kk_1}$ be the non-trivial arborescences in~$\FF_1$. 
    Note that $\ell(T_j) \geq \frac{1+3|V(T_j)|}4$ because all internal vertices of~$T_j$ have out-degree at least~4.
    Therefore, 
    \begin{align*}
        \ell(\FF_1) & \ = \ n - \NN_1 + \sum_{j=1}^{\kk_1} \ell(T_j)
                    \ \geq \  n - \NN_1 + \sum_{j=1}^{\kk_1} \frac{1+3|V(T_j)|}4 \\
                  & \ = \ n - \NN_1 + \frac{3\NN_1}4 + \frac{\kk_1}4 
                    \ = \ n - \frac{\NN_1-\kk_1}4 \; .
    \end{align*}

    The number of components in $\FF_i$ is $n-\NN_i+\kk_i$ for $i=1,2,3$.
    Hence, the number of leaves lost from $\FF_1$ to $\FF_2$ is exactly 
    \[\frac{(n-\NN_1+\kk_1) - (n-\NN_2+\kk_2)}3 \ = \ \frac{\NN_2-\kk_2}3 - \frac{\NN_1-\kk_1}3 \; .\] 
    Similarly, the number of leaves lost from $\FF_2$ to $\FF_3$ is exactly
    \[\frac{(n-\NN_2+\kk_2) - (n-\NN_3+\kk_3)}2 \ = \ \frac{\NN_3-\kk_3}2 - \frac{\NN_2-\kk_2}2 \; .\] 
    Also, the number of leaves lost from $\FF_3$ to $T$ is exactly $n - \NN_3 + \kk_3 - 1 = n - (\NN_3-\kk_3) - 1$.
    Thus
    \begin{align*}
        \ell(T) & \ \geq \ n - \frac{\NN_1-\kk_1}4 - \left(\frac{\NN_2-\kk_2}3 - \frac{\NN_1-\kk_1}3\right) \\
                & \phantom{\ \geq \ n \ } - \left(\frac{\NN_3-\kk_3}2 - \frac{\NN_2-\kk_2}2\right) - (n-(\NN_3-\kk_3)-1) \\
                & \ = \ \frac{1}{12} (\NN_1-\kk_1) + \frac16 (\NN_2-\kk_2) + \frac12 (\NN_3-\kk_3) + 1 \; .
    \end{align*}
\end{proof}

Now we present an upper bound on $\opt(D)$ that relates to the lower bound presented in Lemma~\ref{lem:lower_w3dm}.
Its proof follows the lines of the proof of Lemma~\ref{lem:ourupper}.

\begin{lemma}\label{lemma:w3dm_upper_opt}
  If algorithm \Call{ApproxMaxWeighted3DMatching}{} is an $\alpha$-approximation for the weighted \DM, then 
  $$\opt(D) \ \leq \ \frac{3-2\alpha}{3}(\NN_1-\kk_1) + \frac{\alpha}{6}(\NN_2-\kk_2) + \frac{\alpha}{2}(\NN_3-\kk_3) + 1.$$ 
\end{lemma}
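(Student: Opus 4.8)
The plan is to replay, with two adjustments, the witness argument from the proof of Lemma~\ref{lem:ourupper}. The two changes dictated by \Call{MaxLeaves-W3DM}{} are that the initial branching $\FF_1$ is now a maximal $4$-branching and that the expansions are chosen by an $\alpha$-approximate weighted \DM\ instead of an exact maximum matching. First I would fix a maximum-leaf spanning arborescence $T^*$ of $D$, let $R$ be the set of roots of non-trivial components of $\FF_1$, let $L$ be the set of leaves of $T^*$ that are isolated in $\FF_1$, and set $Z := L \cup R \setminus \{r\}$, so that $|Z| = \kk_1 - 1 + |L|$. For $z \in Z$, define its witness $q(z)$ as the closest proper predecessor of $z$ in $T^*$ lying in a non-trivial component of $\FF_1$; let $w$ be the number of distinct witnesses, and for each witness $s$ build $Z_s$ and the subarborescence $T^*_s$ exactly as before. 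The one genuinely new structural fact is this: since $\FF_1$ is a \emph{maximal} $4$-branching, every vertex of out-degree $0$ in $\FF_1$ has at most three out-neighbors of in-degree $0$; because all children occurring in $T^*_s$ have in-degree $0$ in $\FF_1$, the maximum out-degree in $T^*_s$ is now at most three rather than two.

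Next I would carry out the leaf count inside each $T^*_s$. As in Lemma~\ref{lem:ourupper}, no vertex of $Z_s$ is a proper predecessor of another, so $T^*_s$ has exactly $|Z_s|$ leaves; since its internal vertices have out-degree $1$, $2$, or $3$, the tree identity $\sum_{\text{internal }v}(d^+(v)-1) = |Z_s| - 1$ gives $|Z_s| = a_s + 2b_s + 1$, where $a_s$ and $b_s$ count the out-degree-$2$ and out-degree-$3$ vertices of $T^*_s$. Each such vertex $v$ is a candidate: its children lie in $U_v$ and have in-degree $0$, so an out-degree-$3$ vertex forces $U_v$ to be the triple of its children (contributing $U_v \in \calS$), and an out-degree-$2$ vertex contributes either $U_v$ when $|U_v|=2$ or one of the three pairs that Algorithm~\ref{alg:maxleaves-w3dm} adjoined to $\calS$ from the triple $U_v$. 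Regarding $\calS$ as living over the ground set of in-degree-$0$ vertices, collecting these sets over all witnesses yields a subcollection $M_C$ of $\calS$ whose sets are pairwise disjoint, because in the arborescence $T^*$ every vertex has a unique parent, so the child-sets of distinct internal vertices of $\bigcup_s T^*_s$ share no head. With the weights $w(U)=|U|-1$ we obtain $w(M_C) = \sum_s (a_s + 2b_s) = \sum_s(|Z_s|-1) = |Z| - w$. Since $M_C$ is a feasible \DM\ of $\calS$ and the procedure is an $\alpha$-approximation, writing $W^\ast$ for the optimal weight we get $w(M) \geq \tfrac1\alpha W^\ast \geq \tfrac1\alpha\, w(M_C)$, hence $w \geq |Z| - \alpha\,w(M)$.

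Finally I would evaluate $w(M)$ through the component bookkeeping already used in Lemma~\ref{lem:lower_w3dm}. The sets of $M$ with $|U|=3$ are exactly the $3$-expansions taking $\FF_1$ to $\FF_2$ and those with $|U|=2$ are the $2$-expansions taking $\FF_2$ to $\FF_3$; comparing component counts shows $M$ has $p = \tfrac{(\NN_2-\kk_2)-(\NN_1-\kk_1)}{3}$ triples and $q = \tfrac{(\NN_3-\kk_3)-(\NN_2-\kk_2)}{2}$ pairs, so
\[ w(M) \ = \ 2p + q \ = \ -\tfrac{2}{3}(\NN_1-\kk_1) + \tfrac16(\NN_2-\kk_2) + \tfrac12(\NN_3-\kk_3). \]
Combining $w \geq |Z| - \alpha\,w(M)$ with the bound $\opt(D) \leq \NN_1 - w + |L|$ (each of the $w$ witnesses is internal in $T^*$ and sits in a non-trivial component of $\FF_1$, while the remaining leaves of $T^*$ are exactly those counted by $L$) and with $|Z| = \kk_1 - 1 + |L|$, the $|L|$ terms cancel and substituting the expression for $w(M)$ yields the claimed inequality. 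I expect the main obstacle to be the middle step: one must verify carefully that $M_C$ is a legitimate member of the constructed \DM\ instance — in particular that an out-degree-$2$ vertex arising from a size-$3$ candidate maps to one of the pairs actually placed in $\calS$, and that head-disjointness holds across distinct witnesses — and that the weight identity $w(M_C)=|Z|-w$ is combined with the correct orientation of the factor $\alpha$, since here $M$ is only approximately optimal rather than exactly maximum as in Lemma~\ref{lem:ourupper}.
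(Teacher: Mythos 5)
Your proposal is correct and follows essentially the same route as the paper's own proof: the same witness set $Z$ and subarborescences $T^*_s$ of maximum out-degree three, extraction of a feasible 3D-matching $M_C$ of $\calS$ from the out-degree-two and out-degree-three vertices, the inequality $w(M_C) \leq \alpha\, w(M)$, and the component-count evaluation of $w(M)$ (your $p$ and $q$ are the paper's $|B_3|$ and $|B_2|$). The only differences are presentational — you make explicit the tree identity for leaf counting and the fact that a pair of children arising from a size-three candidate is one of the two-element sets adjoined to $\calS$, which the paper handles by an abuse of notation.
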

\begin{proof}
    Let $T^*$ be a spanning arborescence of $D$ with the maximum number of leaves.  
    Call $R$ the set of all roots of non-trivial components of $\FF_1$. 
    Call $L$ the set of leaves of $T^*$ that are isolated vertices of $\FF_1$. 
    Let $Z := L \cup R \setminus \{r\}$, where $r$ is the root of $D$. 
    The witness of a vertex $z \in Z$ is the closest proper predecessor $q(z)$ of $z$ in $T^*$ which is in a non-trivial component of $\FF_1$.
    Note that each witness is an internal vertex of $T^*$.
    % These witnesses will not necessarily be pairwise distinct, as in~\cite{SchwartgesSW2012}.

    We will prove that the number $\psi$ of distinct witnesses is
    \begin{align}
    \label{eq:witness_number_w3dm}
        \psi & \geq |Z| - 2\alpha\left(\frac{\NN_2-\kk_2}3 - \frac{\NN_1-\kk_1}3\right) - \alpha\left(\frac{\NN_3-\kk_3}2 - \frac{\NN_2-\kk_2}2\right) \\ 
             & = |Z| + 2\alpha\,\frac{\NN_1-\kk_1}3 - \alpha\,\frac{\NN_2-\kk_2}6 - \alpha\,\frac{\NN_3-\kk_3}2 \; \nonumber. 
    \end{align}
    Because $|Z| = \kk_1 - 1 + |L|$ and each witness lies in a non-trivial component of~$\FF_1$ and is internal in $T^*$, we deduce that
    \begin{align*}
        \opt(D) & \ \leq \ \NN_1 - \psi + |L| \\
                & \ \leq \ \NN_1 - |Z| - 2\alpha\,\frac{\NN_1-\kk_1}3 + \alpha\,\frac{\NN_2-\kk_2}6 + \alpha\,\frac{\NN_3-\kk_3}2 + |L| \\
                & \ = \ \NN_1 - \kk_1 - 2\alpha\,\frac{\NN_1-\kk_1}3 + \alpha\,\frac{\NN_2-\kk_2}6 + \alpha\,\frac{\NN_3-\kk_3}2 + 1 \\
                & \ = \ \frac{3-2\alpha}{3}(\NN_1-\kk_1) + \frac{\alpha}{6}(\NN_2-\kk_2) + \frac{\alpha}{2}(\NN_3-\kk_3) + 1 \; .
    \end{align*}
    It remains to prove~\eqref{eq:witness_number_w3dm}.
    The proof follows closely to that of~\eqref{eq:main}.

    For a witness $s$, let $Z_s := \{ z \in Z: q(z) = s\}$ and let $T^*_s$ be the subarborescence of~$T^*$ induced by the union of all paths in $T^*$ from $s$ to each vertex in~$Z_s$.  
    The number of such arborescences $T^*_s$ is exactly $\psi$.  
    The only internal vertex of~$T^*_s$ that is in a non-trivial component of~$\FF_1$ is its root $s$, which is necessarily a leaf of~$\FF_1$. 
    So the maximum out-degree in~$T^*_s$ is at most three. 

    Again, no $z \in Z_s$ is a predecessor in $T^*_s$ of another $z' \in Z_s$. 
    Indeed, suppose by contradiction that $z$ is in the path from $s$ to $z'$.  
    Then $z$ is not a leaf of~$T^*$, and is in $R$, thus being in a non-trivial component of~$\FF_1$, which is a contradiction, because~$z$, and not~$s$, would be the witness for~$z'$.
    Hence $T^*_s$ has exactly $|Z_s|$ leaves.  

    When $|U_v| = 3$, the algorithm includes in $\calS$ also the subsets of $U_v$ with two elements. 
    At most one among $U_v$ and these subsets is included in $M$, where~$M$ is the output of \Call{ApproxMaxWeighted3DMatching}{$\calS$, $w$}, computed during the execution of \Call{MaxLeaves-W3DM}{$D$}.  
    Thus, in what follows, we abuse notation and refer to each such subset of $U_v$ also as $U_v$.
    Let~$B_i$ be the set of vertices $v$ such that $U_v \in M$ and $|U_v| = i$, for $i = 2,3$. 
    Note that~$|B_3|$ is exactly the number of leaves lost from branching~$\FF_1$ to $\FF_2$, so 
    \begin{equation}
    \label{eq:w3dm_b3}
        |B_3| \ = \ \frac{\NN_2-\kk_2}3 - \frac{\NN_1-\kk_1}3 \; .
    \end{equation}
    Also, $|B_2|$ is exactly the number of leaves lost from branching~$\FF_2$ to $\FF_3$, so
    \begin{equation}
    \label{eq:w3dm_b2}
        |B_2| \ = \ \frac{\NN_3-\kk_3}2 - \frac{\NN_2-\kk_2}2 \; .
    \end{equation}
    Finally, $|M| = |B_3| + |B_2|$ and $w(M) = 2|B_3| + |B_2|$.

    Vertices with out-degree two and three in $T^*_s$ are all in the set~$\Candidates$.
    Indeed, let~$v$ be one such vertex.
    Either~$v$ is an isolated vertex or~$v$ is a leaf of a non-trivial component of~$\FF_1$.  
    So~$d_{\FF_1}^+(v) = 0$.  
    As the children of $v$ in $T^*_s$ have in-degree 0 in~$\FF_1$, they are all in~$U_v$. 
    Hence~$v \in \Candidates$. 

    For $i = 2,3$, let $C^i_s$ be the set of vertices of $\Candidates$ with out-degree~$i$ in~$T^*_s$, and let~$C = \cup_s C^i_s$.  
    The number of leaves in~$T^*_s$ is ${|Z_s| = 2|C^3_s|+|C^2_s|+1}$. 
    The set of internal vertices of~$T^*_s$ and of~$T^*_{s'}$ are disjoint for distinct witnesses~$s$ and~$s'$.  
    Thus the sets~$C^i_s$ and $C^i_{s'}$ are disjoint.  
    Let $M_C$ be the subset of $\calS$ corresponding to the vertices in $C$.
    Note that $M_C$ is a 3D-matching of~$\calS$, so~$w(M_C) = 2|C^3_s| + |C^2_s| \leq w(M^*) \leq \alpha\,w(M)$, 
    where $M^*$ is a maximum weighted 3D-matching of $\calS$.
    Hence
    \begin{align*}
        |Z| \ &= \ \sum_s |Z_s| \ = \ \sum_s (2|C^3_s|+|C^2_s|+1) \ = \ w(M_C) + \psi \\
              &\leq \ \alpha\,w(M) + \psi \ = \ 2\alpha|B_3| + \alpha|B_2| + \psi \; . 
    \end{align*}
    Therefore, 
    \begin{align*}
      \psi & \ \geq \ |Z| - 2\alpha|B_3| - \alpha|B_2| \\
           & \ = \ |Z| - 2\alpha\left(\frac{\NN_2-\kk_2}3 - \frac{\NN_1-\kk_1}3\right) - \alpha\left(\frac{\NN_3-\kk_3}2 - \frac{\NN_2-\kk_2}2\right)\,,
    \end{align*}
    which completes the proof of~\eqref{eq:witness_number_w3dm}.
\end{proof}

\begin{theorem}
\label{thm:weighted_approx}
  If algorithm \Call{ApproxMaxWeighted3DMatching}{} is an $\alpha$-approximation for the weighted \DM, then algorithm \Call{MaxLeaves-W3DM}{} is a $\max\{\frac43,\alpha\}$-approximation for the \maxleaves\ on rooted directed acyclic graphs.
\end{theorem}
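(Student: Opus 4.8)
The plan is to mirror the proof of Theorem~\ref{thm:32}: I would combine the lower bound on $\ell(T)$ from Lemma~\ref{lem:lower_w3dm} with the upper bound on $\opt(D)$ from Lemma~\ref{lemma:w3dm_upper_opt}, and then compare the two expressions against the target factor $\beta := \max\{\frac43,\alpha\}$. Writing $a := \NN_1-\kk_1$, $b := \NN_2-\kk_2$, and $c := \NN_3-\kk_3$ (all nonnegative), Lemma~\ref{lem:lower_w3dm} gives $\ell(T) \geq \frac{a}{12} + \frac{b}{6} + \frac{c}{2} + 1$. Since the algorithm is a maximization procedure, it suffices to show $\opt(D) \leq \beta\,\ell(T)$, and for this it is enough to prove that the upper bound on $\opt(D)$ is at most $\beta\bigl(\frac{a}{12} + \frac{b}{6} + \frac{c}{2} + 1\bigr)$.

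The key observation is that an $\alpha$-approximation for the weighted \DM\ is, in particular, a $\beta$-approximation for every $\beta \geq \alpha$. Hence I would apply Lemma~\ref{lemma:w3dm_upper_opt} not with $\alpha$ itself but with $\beta = \max\{\frac43,\alpha\} \geq \alpha$, obtaining $\opt(D) \leq \frac{3-2\beta}{3}a + \frac{\beta}{6}b + \frac{\beta}{2}c + 1$. Substituting this and the lower bound into the target inequality reduces the whole theorem to checking, term by term (all of $a,b,c$ being nonnegative), that $\frac{3-2\beta}{3} \leq \frac{\beta}{12}$, that $\frac{\beta}{6} \leq \frac{\beta}{6}$, that $\frac{\beta}{2} \leq \frac{\beta}{2}$, and that $1 \leq \beta$. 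The middle two are equalities, the last holds since $\beta \geq \frac43 > 1$, and the first rearranges to $36 \leq 27\beta$, i.e.\ $\beta \geq \frac43$, which holds by the choice of $\beta$. This yields $\opt(D) \leq \beta\,\ell(T)$, that is, $\ell(T) \geq \frac{1}{\beta}\opt(D)$, which is exactly the claimed $\max\{\frac43,\alpha\}$-approximation.

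The step I expect to be the crux is the passage to $\beta = \max\{\frac43,\alpha\}$. A direct substitution of $\alpha$ into Lemma~\ref{lemma:w3dm_upper_opt} works fine when $\alpha \geq \frac43$ (the coefficient inequality for $a$ becomes $36 - 24\alpha \leq 3\alpha$, which is precisely $\alpha \geq \frac43$), but it fails for $\alpha < \frac43$: there the coefficient $\frac{3-2\alpha}{3}$ of the $\NN_1-\kk_1$ term is strictly larger than $\frac{\alpha}{12}$, so that contribution cannot be absorbed into $\alpha\,\ell(T)$. Replacing $\alpha$ by $\beta$ is exactly what removes this bottleneck, and the threshold $\frac43$ is the value that makes the $a$-coefficient inequality tight. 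I would also note that using the larger parameter $\beta$ keeps the upper bound valid, either because an $\alpha$-approximation literally is a $\beta$-approximation, or because the right-hand side of Lemma~\ref{lemma:w3dm_upper_opt} is non-decreasing in its parameter once one recalls from~\eqref{eq:w3dm_b3} and~\eqref{eq:w3dm_b2} that $a \leq b \leq c$.
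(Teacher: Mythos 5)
Your proposal is correct. It rests on the same two ingredients as the paper's proof---the lower bound of Lemma~\ref{lem:lower_w3dm} and the upper bound of Lemma~\ref{lemma:w3dm_upper_opt}---but it handles the case $\alpha < \frac43$ by a genuinely different, and cleaner, device. The paper argues in two cases: for $\alpha \geq \frac43$ it performs exactly your coefficient-by-coefficient comparison (with $\beta = \alpha$), while for $\alpha < \frac43$ it keeps $\alpha$ in Lemma~\ref{lemma:w3dm_upper_opt}, substitutes $\alpha = \frac43 - \beta$, expands, and discards the surplus term using $\NN_1-\kk_1 \leq \frac14(\NN_2-\kk_2) + \frac34(\NN_3-\kk_3)$. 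Your lifting step---an $\alpha$-approximation is in particular a $\max\{\frac43,\alpha\}$-approximation, and the run of \Call{MaxLeaves-W3DM}{} (hence the quantities $\NN_i,\kk_i$) does not depend on which guarantee one certifies for the subroutine, so Lemma~\ref{lemma:w3dm_upper_opt} may legitimately be invoked with the larger parameter---collapses both cases into a single comparison whose only nontrivial inequality, $\frac{3-2\beta}{3} \leq \frac{\beta}{12}$, is exactly $\beta \geq \frac43$; and the comparison remains valid even when $\frac{3-2\beta}{3}$ is negative, since $\NN_1-\kk_1 \geq 0$. Your fallback justification, monotonicity of the lemma's right-hand side in its parameter given $\NN_1-\kk_1 \leq \NN_2-\kk_2 \leq \NN_3-\kk_3$ (which follows from~\eqref{eq:w3dm_b3} and~\eqref{eq:w3dm_b2}, or from the component counts as the paper notes), is precisely the content of the paper's second-case algebra, so the two proofs are logically equivalent. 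What each buys: the paper's version is self-contained arithmetic with no appeal to properties of approximation guarantees, while yours is more economical and makes transparent the structural reason why $\frac43$ is the threshold at which the achievable ratio stops improving.
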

\begin{proof}
  For a rooted \Dag\ $D$, let $T$ be the output of \Call{MaxLeaves-W3DM}{$D$}. 
  First, suppose $\alpha \geq \frac43$. 
  In this case, $\frac{3-2\alpha}{3} \leq \frac{\alpha}{12}$ and, by Lemmas~\ref{lem:lower_w3dm} and~\ref{lemma:w3dm_upper_opt}, 
  \begin{align*}
    \opt(D) & \ \leq \ \frac{3-2\alpha}{3}(\NN_1-\kk_1) + \frac{\alpha}{6}(\NN_2-\kk_2) + \frac{\alpha}{2}(\NN_3-\kk_3) + 1 \\
            & \ \leq \ \frac{\alpha}{12} (\NN_1-\kk_1) + \frac{\alpha}6 (\NN_2-\kk_2) + \frac{\alpha}2 (\NN_3-\kk_3) + \alpha \\
            & \ \leq \ \alpha\,\ell(T)\; . 
  \end{align*}

  Now, suppose $\alpha < \frac43$, and let $\beta = \frac43 - \alpha$.
  By Lemma~\ref{lemma:w3dm_upper_opt},
  \begin{align}
    \opt(D) & \ \leq \ \frac{3-2\alpha}{3}(\NN_1-\kk_1) + \frac{\alpha}{6}(\NN_2-\kk_2) + \frac{\alpha}{2}(\NN_3-\kk_3) + 1 \nonumber \\
            & \ \leq \ \left(\frac19+\frac23\,\beta\right)(\NN_1-\kk_1) + \left(\frac29-\frac16\,\beta\right)(\NN_2-\kk_2) \nonumber \\
            & \phantom{\ \leq \ } + \left(\frac23-\frac12\,\beta\right)(\NN_3-\kk_3) + 1 \nonumber \\
            & \ = \ \frac19(\NN_1-\kk_1) + \frac29(\NN_2-\kk_2) + \frac23(\NN_3-\kk_3) + \frac43 \nonumber \\ 
            & \phantom{\ \leq \ } + \frac23\,\beta(\NN_1-\kk_1) - \frac16\,\beta(\NN_2-\kk_2) - \frac12\,\beta(\NN_3-\kk_3) - \frac13 \nonumber \\
            & \ \leq \ \frac43\,\ell(T) + \frac23\,\beta\left((\NN_1{-}\kk_1) - \frac14(\NN_2{-}\kk_2) - \frac34(\NN_3{-}\kk_3)\right) - \frac13\label{eq:lT}\\
            & \ \leq \ \frac43\,\ell(T)\;, \label{eq:negative} 
  \end{align}
  where~\eqref{eq:lT} holds by Lemma~\ref{lem:lower_w3dm} and~\eqref{eq:negative} holds because the number of components in~$\FF_1$, $\FF_2$, and $\FF_3$ is so that $n-\NN_1+\kk_1 \geq n-\NN_2+\kk_2 \geq n-\NN_3+\kk_3$, and this implies that $\NN_1-\kk_1 \leq \NN_2-\kk_2 \leq \NN_3-\kk_3$, and therefore $\NN_1-\kk_1 \leq \frac14(\NN_2-\kk_2) + \frac34(\NN_3-\kk_3)$. 
\end{proof}

Note that the weighted instance of \DM\ we used has only weights~$1$ and~$2$. 
So a good approximation even for this more restricted weighted version of \DM\ would be of interest. 

At the moment, because the best approximation for the weighted \DM\ has ratio greater than~$2$, this does not provide an improvement on the previously best known ratio for \maxleaves.
Now, only a ratio better than $3/2$ for the weighted \DM\ would provide an improvement.

%%%%%%%%%%%%%%%%%%%%%%%%%%%%%%%%%%%%%%%%%%%%%%%%%%%%%%%%%%%%%%%%%%%%%%%%%%%%%%%%%%%%%%%%%%%%%%%%%%%%%%%%%%%%%%%%%%%%%%%%%%%%%%%%%%%%%%%%%%%%%%%%%%%%%%%
%%%%%%%%%%%%%%%%%%%%%%%%%%%%%%%%%%%%%%%%%%%%%%%%%%%%%%%%%%%%%%%%%%%%%%%%%%%%%%%%%%%%%%%%%%%%%%%%%%%%%%%%%%%%%%%%%%%%%%%%%%%%%%%%%%%%%%%%%%%%%%%%%%%%%%%
%%%%%%%%%%%%%%%%%%%%%%%%%%%%%%%%%%%%%%%%%%%%%%%%%%%%%%%%%%%%%%%%%%%%%%%%%%%%%%%%%%%%%%%%%%%%%%%%%%%%%%%%%%%%%%%%%%%%%%%%%%%%%%%%%%%%%%%%%%%%%%%%%%%%%%%
\section{Inapproximability of the vertex-weighted version}
\label{sec:inapproximability}

A vertex-weighted generalization of the maximum leaf spanning tree (the undirected version of our problem) was considered in the literature.
In such generalization, one is given a connected vertex-weighted graph and the goal is to find a spanning tree whose sum of leaf weights is maximum.  

Jansen~\cite{Jansen2012} proved that, unless $\PP = \NP$, this version of the problem does not admit a polynomial-time ratio $\Oh(n^{\frac12-\eps})$ or a $\Oh(\opt^{\frac13-\eps})$-approximation for any $\eps>0$, where $n$ is the number of vertices of the given graph. 
His reduction is from the \IS\ problem.  
A straightforward modification of his reduction shows the same inapproximability results for the vertex-weighted version of \maxleaves\ on rooted dags. 
Next we describe his reduction adapted to produce rooted dags with binary weights. 

The \IS\ problem consists of the following: given a graph~$G$, find an independent set in~$G$ with as many vertices as possible.

Let $G$ be an instance of the \IS\ problem. 
Let $D$ be the rooted dag that has as vertices the vertices of~$G$, a new vertex~$r$ as its root, and a vertex~$e$ for each edge $e$ of~$G$.  
There is an arc from $r$ to each vertex of~$G$ in~$D$.
For each edge~$e=uv$ of $G$, there is an arc from~$u$ to~$e$ and an arc from~$v$ to~$e$ in~$D$. 
So, if~$G$ has~$n$ vertices and~$m$ edges, $D$ has $n+m+1$ vertices and $n+2m$ arcs.
See Figure~\ref{fig:IS}.
Note that~$D$ is~$r$-rooted and acyclic and that, in any spanning arborescence in $D$, the vertices corresponding to edges of $G$ are leaves, because they have out-degree~0 in $D$.
Because the complement of an independent set is an edge cover, the following holds.

\begin{figure}[htb]
    \centerline{\includegraphics[height=2.5cm]{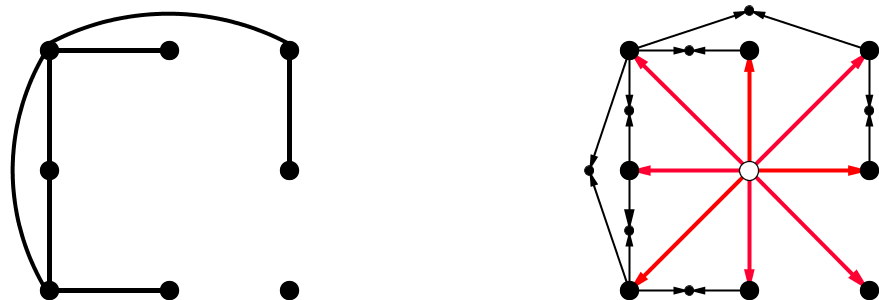}}
    \caption{An instance of the \IS\ problem and the corresponding rooted dag. The white vertex is the root of the dag.}
    \label{fig:IS}
\end{figure}

\begin{lemma}
\label{lem:ISversusWLSA}
    Set $S$ is an independent set in $G$ if and only if there is a spanning arborescence in~$D$ that has $S \cup E(G)$ as leaves.
\end{lemma}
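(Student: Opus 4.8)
The statement to prove is Lemma~\ref{lem:ISversusWLSA}: $S$ is independent in $G$ if and only if $D$ has a spanning arborescence whose leaf set is exactly $S \cup E(G)$. The plan is to prove the two directions separately, exploiting the rigid structure of $D$: the root $r$ points to every vertex of $G$, each edge-vertex $e=uv$ has in-arcs only from $u$ and $v$, and every edge-vertex has out-degree $0$ in $D$, so edge-vertices are leaves in \emph{every} spanning arborescence. The crux throughout is the observation already recorded in the excerpt that the complement $V(G) \setminus S$ of an independent set is precisely an edge cover of $G$.

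\textbf{Forward direction.} First I would assume $S$ is independent and construct an explicit spanning arborescence $T$ with leaf set $S \cup E(G)$. Put in $T$ the arc $r \to v$ for every $v \in V(G)$; this already spans all graph-vertices and makes $r$ internal. Since $S$ is independent, $C := V(G)\setminus S$ is an edge cover, so every edge $e=uv$ has at least one endpoint in $C$; for each $e$ I would select one such endpoint (say, a fixed choice $c(e) \in C \cap \{u,v\}$) and add the arc $c(e) \to e$ to $T$. This attaches every edge-vertex below some vertex of $C$, so every vertex of $C$ has out-degree at least one and is internal, while the vertices of $S$ receive no out-arc and remain leaves. I would then verify that $T$ is genuinely a spanning arborescence: it spans $V(D)$; each non-root vertex has in-degree exactly one ($r\to v$ for graph-vertices, $c(e)\to e$ for edge-vertices); and $T$ is acyclic as a subgraph of the acyclic $D$. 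Hence the leaves of $T$ are exactly $S \cup E(G)$, as desired.

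\textbf{Reverse direction.} Conversely I would assume $D$ has a spanning arborescence $T$ with leaf set exactly $S \cup E(G)$, and show $S$ is independent. Every edge-vertex is a leaf automatically, so the content of the hypothesis is that the graph-vertices that are leaves are exactly those in $S$; equivalently, the internal graph-vertices are exactly $C = V(G)\setminus S$. The key step is to argue $C$ is an edge cover: each edge-vertex $e=uv$, being non-root, has exactly one in-arc in $T$, and by the construction of $D$ that arc comes from $u$ or from $v$; its tail is therefore an internal graph-vertex, i.e.\ an element of $C$ incident to $e$. Thus every edge is covered by $C$, so $C$ is an edge cover and its complement $S$ is independent. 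The main obstacle, such as it is, lies here: one must take care that ``internal graph-vertex'' coincides with ``tail of some edge-vertex's in-arc'', which follows because the only out-arcs available to a graph-vertex in $D$ point to edge-vertices, so a graph-vertex is internal in $T$ if and only if it is the tail of at least one edge-vertex arc. Combining the two directions establishes the equivalence.
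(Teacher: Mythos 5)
Your construction and your reverse-direction argument coincide with the paper's own proof: the paper also attaches every vertex of $V(G)$ directly to the root $r$, uses the fact that $V(G)\setminus S$ is an edge cover to give each edge-vertex a parent in $V(G)\setminus S$, and for the converse uses that the parent of an edge-vertex $e=uv$ must be $u$ or $v$ and must be internal (the paper phrases this as a contradiction: if both $u,v\in S$ were leaves, then $e$ would be isolated in a spanning arborescence). So the underlying mathematics is the same in both directions; your edge-cover phrasing of the converse is just the contrapositive of the paper's.

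The one genuine problem is your reading of ``has $S\cup E(G)$ as leaves'' as \emph{exact} equality of the leaf set, and the step you invoke to support it. In the forward direction you claim that attaching each edge-vertex below a chosen endpoint $c(e)\in C$ makes ``every vertex of $C$ internal''. That does not follow: a vertex of $C$ that is never chosen as $c(e)$ for any edge $e$ (for instance a vertex isolated in $G$, or one all of whose incident edges are assigned to their other endpoints) receives no out-arc in $T$ and remains a leaf, so the leaf set of your $T$ may be strictly larger than $S\cup E(G)$. Worse, under the exact-equality reading the lemma itself is false: take $G$ to be a single edge $ab$ and $S=\emptyset$; then $S$ is independent, but in every spanning arborescence of $D$ exactly one of $a,b$ is the parent of the edge-vertex and the other is a leaf, so no spanning arborescence has leaf set exactly $E(G)$. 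The intended reading, which is the one the paper proves and later uses, is containment: all vertices of $S\cup E(G)$ are leaves. Under that reading your construction is already correct (the unsupported exactness claim is simply unnecessary), and your reverse direction goes through verbatim, since it only uses that the vertices of $S$ are leaves. The fix is therefore purely to drop the exactness claims; the rest of your argument is the paper's.
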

\begin{proof}
  Let $S$ be an independent set in $G$.
  Initialize an arborescence $T$ by taking the root of $D$ and adding arcs to every vertex of $V(G)$.
  Since $V(G) \setminus S$ is an edge cover in $G$, we can augment~$T$ to a spanning arborescence by connecting vertices of $D$ in $V(G) \setminus S$ to vertices of $D$ in $E(G)$.
  Thus all vertices in $S \cup E(G)$ are leaves of $T$.
  
  Now let $T$ be a spanning arborescence of $D$ that has $S \cup E(G)$ as leaves, with $S \subseteq V(G)$.  
  (Recall that any spanning arborescence of $D$ has all vertices in $E(G)$ as leaves.)
  Thus, for $u,v \in S$, if~$uv \in E(G)$, then $e = uv \in V(D)$ would be 
  an isolated vertex in $T$ because both of its in-neighbors~$u$ and $v$ are leaves in $T$, a contradiction.
  Therefore $S$ is an independent set of $G$.
\end{proof}

% So, we assign weights to the vertices of~$D$ as follows: vertices of~$G$ have weight~$1$ while vertices corresponding to edges of~$G$ have weight~$0$.
% The root~$r$ may have an arbitrary weight, because it will never be a leaf in a spanning arborescence of~$D$. 
For any $\eps > 0$, there is no polinomial-time $\Oh(n^{1-\eps})$-approximation for \IS\ unless $\PP = \NP$~\cite{Hastad1999}, where~$n$ is the number of vertices of the given graph~$G$.
Using this and Lemma~\ref{lem:ISversusWLSA}, we derive the following.

\begin{theorem}
\label{thm:inapprox}
    The vertex-weighted \maxleaves\ on directed acyclic graphs with binary weights and maximum in-degree~2
    does not have a polinomial-time $\Oh(n^{1-\eps})$-approximation for any~$\eps > 0$, unless $\PP = \NP$,
    where~$n$ is the number of weight-one vertices of the given directed graph. 
\end{theorem}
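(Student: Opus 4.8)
The plan is to turn the correspondence of Lemma~\ref{lem:ISversusWLSA} into an approximation-preserving reduction by choosing binary weights that make the maximum leaf-weight of $D$ coincide exactly with the independence number of $G$. Concretely, starting from an instance $G$ of \IS\ with $n$ vertices and $m$ edges, I would build the rooted dag $D$ exactly as described before the lemma, and then assign weight $1$ to each of the $n$ vertices of $D$ coming from $V(G)$, and weight $0$ to the root $r$ and to each of the $m$ vertices coming from $E(G)$. These are binary weights, the number of weight-one vertices is exactly $n$, and one checks directly from the construction that every vertex of $D$ has in-degree at most $2$: the edge-vertices have in-degree $2$, the vertices of $V(G)$ have in-degree $1$, and $r$ has in-degree $0$. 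Thus $D$ satisfies all the hypotheses of the statement.

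Next I would compute the optimum. Since $n \geq 1$, the root has positive out-degree in every spanning arborescence and is therefore never a leaf, while every edge-vertex has out-degree $0$ in $D$ and is thus a leaf of every spanning arborescence. Hence the leaf set of any spanning arborescence $T$ of $D$ has the form $S \cup E(G)$ for some $S \subseteq V(G)$, and by Lemma~\ref{lem:ISversusWLSA} this $S$ is an independent set of $G$. Because the edge-vertices carry weight $0$, the leaf-weight $w(T)$ equals $|S|$. Conversely, Lemma~\ref{lem:ISversusWLSA} shows that every independent set $S$ arises in this way. Therefore the maximum leaf-weight over all spanning arborescences of $D$ equals the independence number of $G$, which I denote $\alpha(G)$.

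With this identity in hand, I would argue contrapositively. Suppose there were a polynomial-time $\Oh(n^{1-\eps})$-approximation $A$ for the vertex-weighted \maxleaves\ on such dags, where $n$ is the number of weight-one vertices. Given $G$, I would construct $D$ in polynomial time, run $A$, and extract from its output arborescence $T$ the independent set $S_T \subseteq V(G)$ formed by the weight-one leaves of $T$. The previous paragraph gives $|S_T| = w(T) \geq \alpha(G)/\Oh(n^{1-\eps})$, and since the number of weight-one vertices of $D$ is exactly $|V(G)| = n$, this is precisely an $\Oh(n^{1-\eps})$-approximation for \IS\ measured in the number of vertices of $G$. That contradicts the inapproximability result of H\aa stad~\cite{Hastad1999} cited above unless $\PP = \NP$, which proves the theorem.

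The step I expect to require the most care is the bookkeeping of the approximation parameter: the transfer is valid only because the number of weight-one vertices of $D$ is identical to the number of vertices of $G$, so the two occurrences of $n^{1-\eps}$ genuinely refer to the same $n$. The reduction would not yield the stated bound if the ratio were measured in the total number of vertices of $D$, which is $n+m+1$; it is therefore essential that the theorem, and hence the assumed algorithm, quantifies the ratio in terms of the weight-one vertices alone. The weight-zero edge-vertices are deliberately made invisible to the objective precisely so that they do not dilute the ratio, and confirming that this choice preserves feasibility of the extracted independent set is the one place where I would be most careful.
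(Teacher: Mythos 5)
Your proof is correct and takes essentially the same route as the paper's: the identical reduction from \IS\ (weight~$1$ on vertices of~$G$, weight~$0$ on edge-vertices), the same invocation of Lemma~\ref{lem:ISversusWLSA} to equate leaf-weight with independence number, and the same transfer of H{\aa}stad's bound, with the same crucial bookkeeping that $n$ counts only the weight-one vertices. The only cosmetic difference is that you fix the root's weight to~$0$, whereas the paper notes it can be arbitrary since the root is never a leaf.
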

\begin{proof}
  We will describe an approximation-preserving reduction from \IS\ to the vertex-weighted \maxleaves.

  Let $G$ be an instance of the \IS\ problem. 
  Let $D$ be the rooted dag defined from $G$ as before Lemma~\ref{lem:ISversusWLSA}.
  Assign weights to the vertices of~$D$ as follows: vertices of~$G$ have weight~$1$ while vertices corresponding to edges of~$G$ have weight~$0$.
  The root~$r$ may have an arbitrary weight, because it will never be a leaf in a spanning arborescence of~$D$. 
  Note that the weights are binary and that the maximum in-degree in $D$ is~2.

  Let $T^*$ be a maximum leaf weighted arborescence of $D$ and $S^*$ be a maximum independent set in $G$.  Note that $w(T^*) = |S^*|$ by Lemma~\ref{lem:ISversusWLSA}.
  
  Suppose that $A$ is an $\Oh(n^{1-\eps})$-approximation, for some $\eps>0$, for the vertex-weighted \maxleaves\ on dags with binary weights and maximum in-degree~2, where~$n$ is the number of weight-one vertices of the given directed graph.
  Let $T$ be the spanning arborescence of~$D$ obtained from applying $A$ to $D$ with weights $w$.  
  Then, for some constant~${c > 0}$, we have that~$w(T^*) \leq c\,n^{1-\eps}\,w(T)$, where $n = |V(G)|$.
  Let $S$ be the set of~$w(T)$ leaves of~$T$ in $V(G)$ which, by Lemma~\ref{lem:ISversusWLSA}, form an independent set in~$G$.
  Hence,
  \begin{equation}
    |S| \ = \ w(T) \ \geq \ \frac{w(T^*)}{c\,n^{1-\eps}} = \frac{|S^*|}{c\,n^{1-\eps}} \; .
  \end{equation}
  This would then be an $\Oh(n^{1-\eps})$-approximation for the \IS{}, which exists only if~$\PP=\NP$ by~\cite{Hastad1999}.
\end{proof}

To avoid using weight zero, a similar result can be obtained by assigning weights~$m$ and~$1$ instead of $1$ and~$0$, respectively, where $m$ is the number of edges in $G$.
For this case, Lemma~\ref{lem:ISversusWLSA} implies that an independent set of size $t$ in $G$ corresponds to a spanning arborescence of leaf weight~$(t+1)m$, and a similar inapproximability result holds, as Jansen~\cite{Jansen2012} proved for the undirected version.

%%%%%%%%%%%%%%%%%%%%%%%%%%%%%%%%%%%%%%%%%%%%%%%%%%%%%%%%%%%%%%%%%%%%%%%%%%%%%%%%%%%%%%%%%%%%%%%%%%%%%%%%%%%%%%%%%%%%%%%%%%%%%%%%%%%%%%%%%%%%%%%%%%%%%%%
%%%%%%%%%%%%%%%%%%%%%%%%%%%%%%%%%%%%%%%%%%%%%%%%%%%%%%%%%%%%%%%%%%%%%%%%%%%%%%%%%%%%%%%%%%%%%%%%%%%%%%%%%%%%%%%%%%%%%%%%%%%%%%%%%%%%%%%%%%%%%%%%%%%%%%%
%%%%%%%%%%%%%%%%%%%%%%%%%%%%%%%%%%%%%%%%%%%%%%%%%%%%%%%%%%%%%%%%%%%%%%%%%%%%%%%%%%%%%%%%%%%%%%%%%%%%%%%%%%%%%%%%%%%%%%%%%%%%%%%%%%%%%%%%%%%%%%%%%%%%%%%
\section{Future directions}
\label{sec:remarks}

Improving on the 92-approximation for the general directed case would be very interesting.
A major difficulty is that greedy strategies do not apply so easily, because not every branching can be extended to a spanning branching in an arbitrary rooted digraph.
The strategy used by Daligault and Thomass\'{e}~\cite{DaligaultT2009} consists of a series of reductions, and some of them end up with a dag.
It is tempting to try to use an approximation for dags within their algorithm to achieve an improved ratio, however we did not succeed in doing that so far. 

Directed acyclic graphs have directed tree width zero~\cite{JohnsonRST2001}.
Is it possible to extend our approximation or any greedy algorithm for \maxleaves\ to address directed graphs with bounded directed tree width?

It is natural to wonder if there is a way to optimize one of the expansions used in Solis-Oba's algorithm to achieve a better approximation ratio for the undirected case.
Also, for the undirected case, there are better approximations for cubic graphs~\cite{BonsmaZ2011,CorreaFMW2008}.
Maybe one can obtain better bounds on the approximation ratio for dags with out-degree bounded by three or two.

%\section{Questions}
%
%Can we prove a better version of Corollary~2 from~\cite{DaligaultT2009}? 
%
%\begin{corollary}[Cor.~2~\cite{DaligaultT2009}]
%  Let $D$ be an $r$-rooted dag with $d_D^+(r) \geq 2$. 
%  Let $\ell$ be the number of vertices of in-degree at least two.
%  Then $D$ has an arborescence with at least $\frac{\ell + d_D^+(r)-1}3+1$ leaves.
%\end{corollary}
%
%They said this bound is tight up to one leaf (see their Fig. 1).
%
%Another question that crossed my mind: can we adapt this idea to the undirected case, 
%to get a ratio better than 2 there?  This would be quite a breakthrough, because the 
%ratio 2 due to Solis-Oba for the undirected case stands as the best for more than 20 years... 
%
%We can try first to greedly expand the forest through all, but the last, of his rules.
%Then, once those rules cannot be applied, I think we might be able to build a graph as 
%we did, to compute the maximum number of the last rule we can apply.  I just don't know
%how his more complicated witness assignment will behave, or if a simpler witness assignment
%would work with this approach.  But I think it is worth some thought... 

\bibliographystyle{plain} %{elsarticle-num}
\bibliography{maxleaves}

\begin{thebibliography}{10}

\bibitem{AlonFGKS2009}
N.~Alon, F.~Fomin, G.~Gutin, M.~Krivelevich, and S.~Saurabh.
\newblock Spanning directed trees with many leaves.
\newblock {\em SIAM Journal on Discrete Mathematics}, 23(1):466--476, 2009.

\bibitem{ArkinH1998}
E.~M. Arkin and R.~Hassin.
\newblock On local search for weighted $k$-set packing.
\newblock {\em Mathematics of Operations Research}, 23(3):640--648, 1998.

\bibitem{Berman2000}
P.~Berman.
\newblock A $d/2$ approximation for maximum weight independent set in $d$-claw
  free graphs.
\newblock {\em Nord. J. Comput.}, 7(3):178--184, 2000.

\bibitem{BinkeleRaibleFFLSV2012}
D.~Binkele-Raible, H.~Fernau, F.~V. Fomin, D.~Lokshtanov, S.~Saurabh, and
  Y.~Villanger.
\newblock Kernel(s) for problems with no kernel: On out-trees with many leaves.
\newblock {\em ACM Transactions on Algorithms}, 8(4), 2012.
\newblock article 38.

\bibitem{BonsmaZ2011}
P.~Bonsma and F.~Zickfeld.
\newblock A $3/2$-approximation algorithm for finding spanning trees with many
  leaves in cubic graphs.
\newblock {\em SIAM Journal on Discrete Mathematics}, 25(4):1652--1666, 2011.

\bibitem{CorreaFMW2008}
J.R. Correa, C.~G. Fernandes, M.~Matamala, and Y.~Wakabayashi.
\newblock A $5/3$-approximation for finding spanning trees with many leaves in
  cubic graphs.
\newblock In {\em Proceedings of the 5th International Workshop on
  Approximation and Online Algorithms (WAOA 2007)}, volume 4927 of {\em Lecture
  Notes in Computer Science}, pages 184--192, 2008.

\bibitem{Cygan2013}
M.~Cygan.
\newblock Improved approximation for 3-dimensional matching via bounded
  pathwidth local search.
\newblock In {\em Proceedings of the IEEE 54th Annual Symposium on Foundations
  of Computer Science (FOCS)}, pages 509--518, 2013.

\bibitem{CyganFKLMPPS2016}
M.~Cygan, F.~V. Fomin, {\L}.~Kowalik, D.~Lokshtanov, D.~Marx, M.~Pilipczuk,
  M.~Pilipczuk, and S.~Saurabh.
\newblock {\em Parameterized Algorithms}.
\newblock Springer, 2016.

\bibitem{DaligaultT2009}
J.~Daligault and S.~Thomass\'{e}.
\newblock On finding directed trees with many leaves.
\newblock In {\em International Workshop on Parameterized and Exact
  Computation}, volume 5917 of {\em Lecture Notes in Computer Science}, pages
  86--97, 2009.

\bibitem{Edmonds1965}
J.~Edmonds.
\newblock Paths, trees, and flowers.
\newblock {\em {Canadian Journal of Mathematics}}, 17:449--467, 1965.

\bibitem{FernandesL2020}
C.~G. Fernandes and C.~N. Lintzmayer.
\newblock Leafy spanning arborescences in {DAG}s.
\newblock In {\em Proceedings of the 14th Latin American Symposium on
  Theoretical Informatics (LATIN)}, Lecture Notes in Computer Science, 2020.
\newblock to appear.

\bibitem{FurerY2014}
M.~F\"{u}rer and H.~Yu.
\newblock Approximating the $k$-set packing problem by local improvements.
\newblock In {\em Proceedings of the International Symposium on Combinatorial
  Optimization (ISCO)}, volume 8596 of {\em Lecture Notes in Computer Science},
  pages 408--420, 2014.

\bibitem{GareyJ1979}
M.~R. Garey and D.~S. Johnson.
\newblock {\em Computers and Intractability}.
\newblock W.H. Freeman and Co., New York, 1979.

\bibitem{Hastad1999}
J.~H\r{a}stad.
\newblock Clique is hard to approximate within $n^{1-\eps}$.
\newblock {\em Acta Math.}, 182(1):105--142, 1999.

\bibitem{Jansen2012}
B.~M.~P. Jansen.
\newblock Kernelization for maximum leaf spanning tree with positive vertex
  weights.
\newblock {\em Journal of Graph Algorithms and Applications}, 16(4):811--846,
  2012.

\bibitem{JohnsonRST2001}
T.~Johnson, N.~Robertson, P.~D. Seymour, and R.~Thomas.
\newblock Directed tree-width.
\newblock {\em Journal of Combinatorial Theory, Series B}, 82:138--154, 2001.

\bibitem{JuttnerM2005}
A.~J\"{u}ttner and A.~Magi.
\newblock Tree based broadcast in ad hoc networks.
\newblock {\em Mobile Networks and Applications}, 10(5):753--762, 2005.

\bibitem{LuR1998}
H.~Lu and R.~Ravi.
\newblock Approximating maximum leaf spanning trees in almost linear time.
\newblock {\em Journal of Algorithms}, 29(1):132--141, 1998.

\bibitem{PopeS2015}
J.~Pope and R.~Simon.
\newblock Efficient one-to-many broadcasting for resource-constrained wireless
  networks.
\newblock In {\em Proceedings of the 40th Annual IEEE Conference on Local
  Computer Networks (LCN)}, page 8 pp., 2015.

\bibitem{SchwartgesSW2012}
N.~Schwartges, J.~Spoerhase, and A.~Wolff.
\newblock Approximation algorithms for the {M}aximum {L}eaf {S}panning {T}ree
  {P}roblem on acyclic digraphs.
\newblock In {\em {9th Workshop on Approximation and Online Algorithms
  (WAOA)}}, volume 7164 of {\em Lecture Notes in Computer Science}, pages
  77--88, 2011.

\bibitem{SolisOba1998}
R.~Solis-Oba.
\newblock 2-approximation algorithm for finding a spanning tree with maximum
  number of leaves.
\newblock In {\em Proceedings of the European Symposium on Algorithms (ESA)},
  volume 1461 of {\em Lecture Notes in Computer Science}, pages 441--452, 1998.

\bibitem{SolisObaBL2017}
R.~Solis-Oba, P.~Bonsma, and S.~Lowski.
\newblock A 2-approximation algorithm for finding a spanning tree with maximum
  number of leaves.
\newblock {\em Algorithmica}, 77:374--388, 2017.

\end{thebibliography}

\end{document}